\def\BibTeX{{\rm B\kern-.05em{\sc i\kern-.025em b}\kern-.08em
    T\kern-.1667em\lower.7ex\hbox{E}\kern-.125emX}}
\renewcommand{\trace}{\mathrm{Tr}}
\theoremstyle{remark}	\newtheorem{theorem}{Theorem}
\theoremstyle{remark}	\newtheorem{lemma}[theorem]{Lemma}
\theoremstyle{remark}	
\theoremstyle{remark}	
\theoremstyle{remark} \newtheorem{definition}{Definition}
\theoremstyle{remark} \newtheorem{remark}{Remark}
\theoremstyle{remark} 
\begin{document}

\title{Entanglement-Assisted Covert Communication via Qubit Depolarizing Channels%
}

\author{
		\vspace{0.1cm}
    \IEEEauthorblockN{
    Elyakim Zlotnick\IEEEauthorrefmark{1}, Boulat Bash\IEEEauthorrefmark{2}, and Uzi Pereg\IEEEauthorrefmark{1}} \\
		\vspace{0.25cm}
    \IEEEauthorblockA{\normalsize \IEEEauthorrefmark{1}Electrical and Computer Engineering and Hellen-Diller Quantum Center, Technion -- Israel Institute of Technology \\
    \IEEEauthorrefmark{2}Electrical and Computer Engineering, University of Arizona%
    }
    }
\maketitle
\begin{abstract}
We consider entanglement-assisted communication over the qubit depolarizing channel under the security requirement of covert communication, where 
the transmission itself
must be concealed from detection by an adversary.
Previous work showed that $O(\sqrt{n})$ information bits can be reliably and covertly transmitted in $n$ channel uses without entanglement assistance.  However, Gagatsos et al. (2020) showed that entanglement assistance can increase this scaling to $O(\sqrt{n}\log{n})$ for continuous-variable bosonic channels. Here, we present a finite-dimensional parallel, and show that $O(\sqrt{n}\log{n})$ covert bits can be transmitted reliably over $n$ uses of a qubit depolarizing channel.
The coding scheme employs ``weakly" entangled states 
such that their squared amplitude scales as $O\left(\nicefrac{1}{\sqrt{n}}\right)$. 
\end{abstract}

\begin{IEEEkeywords}
Quantum communication, covert communication, entanglement assistance, square-root law violation.
\end{IEEEkeywords}

\section{Introduction}
Privacy and confidentiality are critical in communication systems \cite{wang2020security}.
Traditional security approaches (e.g., encryption 
\cite{talbot2006complexity}, %
information-theoretic secrecy 
\cite{bloch2011physical}, %
and quantum key distribution \cite{bennet1984quantum,renner2008security,scarani2009security}) ensure that an eavesdropper is unable to recover any transmitted information. 
However,
privacy and safety concerns may further require \emph{covertness} \cite{9380147,7217803}. 
Covertness is a stringent requirement whereby 
%
the transmission itself is concealed from detection by an adversary (a warden) \cite{bash2015hiding,8917582}. 
Despite the severity of limitations imposed by covertness,
it is possible to  communicate $O(\sqrt{n})$  bits of information both reliably and covertly over $n$ classical channel uses \cite{Bash_first_classic,Bloch,wang2016fundamental}. %
This property is referred to as the ``square root law” (SRL).
The SRL has also been observed in covert communication over 
finite-dimensional classical-quantum channels \cite{Bash-Quantum, wang16cq-srlconverse, Bash-Quantum-revision,tahmasbi2019framework}, as well as continuous-variable  bosonic channels \cite{bash_first_bosonic,8976410,Bash-Bosonic,9834394}. %
Covert sensing is also governed by an SRL \cite{bash2017fundamental_sensing, goeckel17sensinglinsystems-asilomar, tahmasbi2021covert}.
Other covert models are studied in
\cite{Deng2022,ZivariFard2022,Yang2022,Amihood:ITW2022,hayashi2023covert,bounhar2023mixing}.

Proving the achievability of the SRLs discovered so far involves the following principles.
In the finite-dimensional case, both classical and quantum 
\cite{Bloch,wang2016fundamental,Bash-Quantum,Bash-Quantum-revision,tahmasbi2019framework},  a symbol (say, $0$) in the input alphabet is designated as ``innocent.'' The codebook is generated such that %
a non-innocent symbol is transmitted with probability $\sim\nicefrac{1}{\sqrt{n}}$ to ensure covertness.
On the other hand, the innocent symbol corresponding to zero transmitted power occurs naturally in the continuous-variable covert communication over classical additive white Gaussian noise (AWGN) \cite{Bash_first_classic,Bloch,wang2016fundamental} and classical-quantum bosonic \cite{bash_first_bosonic,8976410,Bash-Bosonic,9834394} channels.
Maintaining average transmitted power $O(\nicefrac{1}{\sqrt{n}})$ correspondingly measured in Watts and in the emitted photon number ensures covertness.

Pre-shared entanglement resources are known to increase performance and throughput %
\cite{bennett1999entanglement,EA_capacity_bennet,hao2021entanglement,chiuri2013experimental,9319007}. 
Gagatsos et al. 
\cite{Bash-Bosonic} showed that entanglement assistance allows transmission of $O(\sqrt{n}\log{n})$ reliable and covert bits over $n$ uses of continuous-variable bosonic channel, surpassing the SRL scaling (see also \cite{bloch_resource_efficient}).
%
As in the unassisted setting, the transmission is limited 
to $O(\nicefrac{1}{\sqrt{n}})$ mean photon number.
However, so far it has remained open whether such a performance boost can be achieved in communication over finite-dimensional quantum channels.

The depolarizing channel is a fundamental model  that has gained significant attention in both experimental \cite{ChiuriGiacominiMacchiavelloMataloni:13p,Google:19p} and theoretical \cite{king2003capacity,
depolarizing_comp}  research. Depolarization may be regarded as the worst type of noise  in a quantum system and can also be interpreted as the result of a random unitary error with a probability law that follows the Haar measure or, alternatively, a random Pauli error.
Furthermore,
the insights on the depolarizing channel are often useful in the derivation of results for a general quantum channel \cite{bennett1999entanglement}, \cite[Sec. 11.9.1]{wilde2013quantum}.

Here, we show that entanglement assistance enables reliable and covert transmission of $O(\sqrt{n}\log{n})$ bits in $n$ uses of a finite-dimensional qubit depolarizing channel. The entanglement-assisted covert communication scheme is illustrated in Figure~\ref{Figure:Covert_EA}.
Our analysis is fundamentally different from the previous works. %
In particular, we do \emph{not} encode a random bit sequence with $\sim\nicefrac{1}{\sqrt{n}}$ frequency (or probability) of non-innocent symbols. Instead, we employ ``weakly" entangled states of the form
\begin{align}
\ket{\psi_{A_1 A}}=
\sqrt{1-\alpha}\ket{00}+\sqrt{\alpha}\ket{11},
\label{eq:Superposition}
\end{align}
such that the squared amplitude of this  quantum superposition of states describing innocent and non-innocent symbols is $\alpha=O\left(\nicefrac{1}{\sqrt{n}}\right)$. 
The labels $A_1$ and $A$ correspond to a reference system and to the channel input system, respectively. The former can be interpreted as Bob's share of the entanglement resource.
 The idea is inspired by a recent work on non-covert communication  showing that 
 controlling $\alpha\in [0,1]$ 
 using states in \eqref{eq:Superposition} can outperform time division \cite{9834764}.
 To show covertness, we observe that tracing out the resource system $A_1$ from $\ket{\psi_{A_1 A}}$ results in a state identical to the one in unassisted scenario from \cite{Bash-Quantum, Bash-Quantum-revision}.

 The paper is organized as follows. 
 In Section II, the definitions and channel model are provided, including  notation, an overview of the system and coding, and a presentation of the covert communication problem.
 The results are described in Section III, with the main achievability proof in Section IV and technical details deferred to the appendices. 
Section V presents interpretation through energy-constrained communication, and Section V concludes with a summary and discussion.

\begin{figure*}[tb]
\includegraphics[scale=0.84,trim={2.5cm 10.5cm 0 10cm},clip]{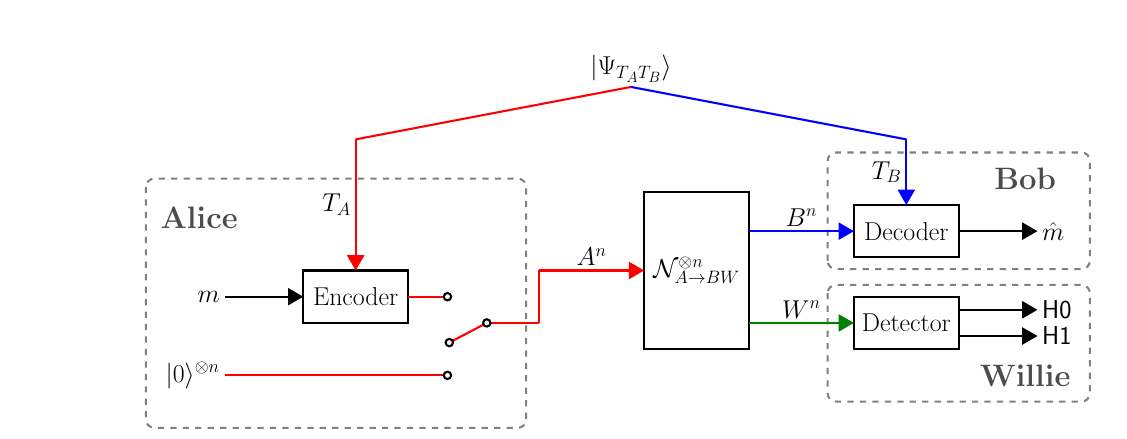} %
\caption{Entanglement-assisted coding for covert communication over a quantum  channel $\mathcal{N}_{A\rightarrow B W}$.
%
%
Alice and Bob  access entangled resources in systems $T_{A}$ and $T_{B}$, respectively.
Message $m$ is encoded by applying the map $\mathcal{F}^{(m)}_{  T_A \rightarrow A^n}$ to the entangled system $T_A$. 
Alice decides whether to transmit to Bob (Case 1) or not (Case 0). A switch connects the channel to the encoder in Case 1 or to a zero sequence 
$\ket{0}^{\otimes n}$ in Case 0.
Alice transmits the systems $A^n$ over the quantum channel.
Bob receives the channel output systems $B^n$,  and performs a joint decoding measurement on the systems $B^n$ and $T_B$, using a POVM $\mathcal{D}_{B^n T_B}  $. Willie receives the output systems $W^n$, and performs a binary measurement to test whether transmission has taken place.
}
\label{Figure:Covert_EA}
\end{figure*}

\section{Definitions and Channel Model}
\subsection{Notation}
\label{sec:notation}
We use standard notation  in quantum information processing,
as, e.g., in \cite[Ch.~2.2.1]{tomamichel2015quantum}. The Hilbert space for system $A$ is denoted by $\mathcal{H}_A$. The space of linear operators (resp. density operators) $\mathcal{H}\to \mathcal{H}$ is denoted by $\mathcal{L}(\mathcal{H})$ (resp. $\mathscr{S}(\mathcal{H})$). %
A positive operator-valued measure (POVM) $\{ D_m\}_{m=1}^M$ is a set of positive semidefinite linear operators in $\mathcal{L}(\mathcal{H})$ such that $\sum_{m=1}^M D_m = \mathbb{1}$, where
$\mathbb{1}$ is the identity operator on $\mathcal{H}$.

Given a pair of 
quantum states $\rho,\sigma\in \mathscr{S}(\mathcal{H})$, the quantum relative entropy is defined as  $D(\rho||\sigma) = \trace[\rho(\log(\rho)-\log(\sigma )]$, if $  \mathrm{supp}(\rho)\subseteq \mathrm{supp}(\sigma)$; and $D(\rho||\sigma)=+\infty$, otherwise.
In addition, for a spectral decomposition  $\sigma = \sum_i \lambda_i P_i$,  let \cite{tahmasbi2021covert}:
\begin{align}
    \eta(\rho||\sigma) &= \sum_{i \neq j} \frac{\log(\lambda_i) - \log(\lambda_j)}{\lambda_i-\lambda_j}\Tr [(\rho-\sigma)P_i(\rho-\sigma)P_j] 
    \nonumber\\& 
    \phantom{=}+ \sum_i \frac{1}{\lambda_i}\Tr [(\rho-\sigma)P_i(\rho-\sigma)P_i] \,.\label{eq:eta}
\end{align}
%
Given a bipartite state $\rho_{AB}$, 
 the quantum mutual information is defined as
$
I(A;B)_\rho=H(\rho_A)+H(\rho_B)-H(\rho_{AB}) 
$, where $
H(\rho) \equiv  -\trace[ \rho\log \rho  ]
$ denotes the von Neumann entropy for a density operator $\rho$. 
Furthermore, the conditional quantum entropy 
is defined by
$H(A|B)_{\rho}=H(\rho_{AB})-H(\rho_B)$. 

A quantum  channel is defined  as a  completely-positive  trace-preserving (CPTP) linear map $\mathcal{N}_{A\to B}:\mathcal{L}(\mathcal{H}_A)\to \mathcal{L}(\mathcal{H}_B)$.
Every quantum channel has a Stinespring representation,
$\mathcal{N}_{A\to B}(\rho)=\Tr_E(V\rho V^\dag)$, for $\rho\in\mathcal{L}(\mathcal{H}_A)$, 
where the operator $V:\mathcal{H}_A \to \mathcal{H}_B \otimes \mathcal{H}_E$ is an isometry. %

For a given function $g(n)$, we denote by $O\big(g(n)\big)$ the set of functions $f(n)$ for which there exist positive constants $c$ and $n_0$ such that $ \abs{f(n)} \leq c g(n)$ for all $n \geq n_0$. We write $f(n)=O\big(g(n)\big)$ to indicate that a function $f(n)$ belongs to the set $O\big(g(n)\big)$ \cite{cormen2022introduction}.
%
Equivalently, 
 \begin{align}
 f(n)= O\big(g(n)\big) \text{ if }%
\limsup_{n\to\infty} \left|\frac{f(n)}{g(n)}\right|<\infty \,.
\label{Equation:bigOzero}
\end{align}
%
%
%
%
%
Similarly, for continuous-variable functions,
 $F$ and $G$ on
$\in[0,\infty)$, we write
 \begin{align}
 F(x)=\mathcal{O}\big(G(x)\big) \text{ if }%
\limsup_{x\to 0} \left|\frac{F(x)}{G(x)}\right|<\infty \,.
\label{Equation:bigOzeroContinuous}
\end{align}

Additionally, for a given function $g(n)$, we denote by $\omega\big(g(n)\big)$ the set of functions $f(n)$, where for all positive constants $c$, there exists $n_0$ such that $0\leq c g(n) \leq f(n)$ for all $n \geq n_0$. We write $f(n)=\omega\big(g(n)\big)$ to indicate that a function $f(n)$ belongs to the set $\omega\big(g(n)\big)$. Equivalently,
\begin{align}
    f(n)= \omega\big(g(n)\big) \text{ if }%
\lim_{n\to\infty} \frac{f(n)}{g(n)} =\infty \,.
\end{align}
Similarly, for a given function $g(n)$, we denote by $o\big(g(n)\big)$ the set of functions $f(n)$, where for all positive constants $c$, there exists $n_0$ such that $0\leq f(n)\leq c g(n) $ for all $n \geq n_0$. We write $f(n)=o\big(g(n)\big)$ to indicate that a function $f(n)$ belongs to the set $o\big(g(n)\big)$. Equivalently,
\begin{align}
    f(n)= o\big(g(n)\big) \text{ if }%
\lim_{n\to\infty} \frac{f(n)}{g(n)} =0 \,.
\end{align}

\subsection{Channel Model}
\label{Subsection:channel_model}
Consider a covert communication quantum channel
$\mathcal{N}_{A\to BW}$, which maps a quantum input state $\rho_A$  to a joint  output state $\rho_{BW}$.
The systems $A$, $B$, and $W$ are associated with the transmitter, the legitimate receiver, and an adversarial warden, referred to as Alice, Bob, and Willie. %
The marginal channels $\mathcal{N}_{A\to B}$ and $\mathcal{N}_{A\to W}$, from Alice to Bob, and from Alice to Willie, respectively, satisfy
$%
\mathcal{N}_{A\to B}(\rho_A)=\Tr_{W}\left( \mathcal{N}_{A\to BW}(\rho_A)\right) %
$ and $%
\mathcal{N}_{A\to W}(\rho_A)=\Tr_{B}\left( \mathcal{N}_{A\to BW}(\rho_A)\right) %
$ %
for %
$\rho_A\in \mathscr{S}(\mathcal{H}_A)$.
Our channel is  memoryless: for $\rho_{A^n}$ occupying input systems $A^n=(A_1,
\ldots, A_n)$, the joint output state is 
$\mathcal{N}^{\otimes n}_{A\to BW}(\rho_{A^n})$.

The depolarizing channel is a natural  model for noise in quantum systems
\cite{bennett1999entanglement,king2003capacity,depolarizing_comp}.
The qubit depolarizing channel  with parameter $q$
transmits the input qubit perfectly with probability $1 - q$, and outputs a completely mixed
state with probability $q$.
Consider a qubit depolarizing channel from Alice to Bob expressed as:
\begin{align}
&\mathcal{N}_{A\to B}(\rho_A)=
(1-q)\rho_A+q\frac{\mathbb{1}}{2}
\nonumber\\
&= \left( 1-\frac{3q}{4} \right)\rho_A%
\label{eq:twirl}%
+\frac{q}{4}\left(X\rho_A X+Y\rho_A Y+Z\rho_A Z\right),
\end{align}
where $0<q<1$, with dimensions $\dim(\mathcal{H}_A)=\dim(\mathcal{H}_B)=2$,
$X$, $Y$, and $Z$ are the Pauli operators, %
and \eqref{eq:twirl} follows from the Pauli twirl identity \cite[Ch.~4.7.4]{wilde2013quantum}.
Here, we investigate covert communication over a depolarizing channel
$\mathcal{V}_{A\to BE_1 E_2}$ given by
the Stinespring dilation:
\begin{align}
\mathcal{V}_{A\to BE_1 E_2}(\rho_A)=
V\rho_A V^\dag,
\end{align}
where 
$V:\mathcal{H}_A \to \mathcal{H}_B \otimes \mathcal{H}_{E_1} \otimes \mathcal{H}_{E_2}$ is an isometry defined by %
\begin{align}
    V &\equiv \sqrt{1-\frac{3q}{4}}\mathbb{1} \otimes \ket{00} 
    +  \sqrt{\frac{q}{4}}X\otimes \ket{01} 
      \nonumber\\&  
    + \sqrt{\frac{q}{4}}Y\otimes \ket{11} + \sqrt{\frac{q}{4}} Z\otimes \ket{10} \,.
\label{Equation:Depolarizing_Stinespring}
\end{align}
\begin{remark}
The canonical Stinespring dilation  for the qubit depolarizing channel is defined by $\widetilde{\mathcal{V}}_{A\to BE}(\rho)=\widetilde{V}\rho \widetilde{V}^\dag$, where
$\widetilde{V}\equiv \sqrt{1-\frac{3q}{4}}\mathbb{1} \otimes \ket{0} 
    +  \sqrt{\frac{q}{4}}X\otimes \ket{1} 
    + \sqrt{\frac{q}{4}}Y\otimes \ket{2} + \sqrt{\frac{q}{4}} Z\otimes \ket{3}$ (see \cite[Eq. (13)]{depolarizing_comp}).
    For $E\equiv (E_1,E_2)$, our definition in \eqref{Equation:Depolarizing_Stinespring} is equivalent to this canonical description.
    Note, however, that any other Stinespring representation is equivalent to \eqref{Equation:Depolarizing_Stinespring} up to an isometry on the environment $E$ %
    \cite[Sec. III-B]{4475375}.
\end{remark}

We consider three cases:
\begin{itemize}
\item
Scenario 1: Willie receives $(E_1,E_2)$

\item 
Scenario 2: Willie receives $E_2$

\item 
Scenario 3: Willie receives $E_1$

\end{itemize}

\begin{remark}
\label{Remark:Worst_Case}
In any depolarizing channel model,
Scenario 1  represents the worst-case scenario where  
  Willie is given access to Bob's entire environment, 
$E=(E_1,E_2)$.
This allows Willie to acquire maximum information in the quantum setting. 
%
It is important to note that quantum no-cloning theorem prohibits Willie from receiving a copy of Bob's output state, whereas in the classical setting, Willie could have a copy of Bob's output. 
Hence, 
 the quantum channel from Alice to Willie is \emph{not} a depolarizing channel.
\end{remark}

\begin{remark}
\label{Remark:bounds_of_q}
    In the boundary case of $q=0$, Bob receives the qubit state as is, while Willie obtains no information, in agreement with the no-cloning theorem.
    Essentially, there is no warden in this case, hence we may transmit 
    $O(n)$ bits, and achieve a positive Shannon rate in bits per channel use.
    Conversely, if $q=1$, Willie receives the qubit state, and Bob gets only noise, rendering any communication impossible.
\end{remark}

\begin{remark}
Scenarios 2 and 3 can be practically motivated by Willie's instruments not having access to the entirety of Alice and Bob's environment.
While the model specification of Willie's observation may seem artificial, it allows us to demonstrate interesting properties of covert communication with entanglement assistance. 
We argue that covert communication is impossible in Scenario~1, while in Scenario~2, Alice can transmit $O(n)$ covert bits to Bob.
Yet, Scenario 3 is the most interesting case, where entanglement assistance increases the scale of information bits from
$O(\sqrt{n})$ to $O(\sqrt{n} \log n)$.
We observe that the performance does not only depend on the dimension, as Willie receives a single qubit in both Scenarios 2 and 3, yet the behavior is completely different.
Further details are given in the Results section (see Section~\ref{Section:Results}).
\end{remark}

\subsection{Entanglement-assisted Code}
\label{sec:code}
The definition of a code for covert communication over a quantum channel with entanglement assistance is given below.
\begin{definition}
An $(M,n)$ entanglement-assisted code $(\Psi,\mathcal{F},\mathcal{D})$  consists
of: a message set $[1:M]$, where $M$ is an integer, a pure entangled state $\Psi_{T_A T_B}$, a collection of encoding maps $\mathcal{F}_{T_A \to A^n}^{(m)}:\mathscr{S}(\mathcal{H}_{T_A}) \to \mathscr{S}(\mathcal{H}_A^{\otimes n})$ for $m \in [1:M]$, and a decoding POVM $\mathcal{D}_{B^n T_B} = \{D_m \}_{m=1}^M$.
\end{definition}

The communication setting is depicted in Figure~\ref{Figure:Covert_EA}.
Suppose that Alice and Bob share the entangled 
state $\Psi_{T_A T_B}$, in systems $T_A$ and $T_B$, respectively. %
Alice wishes to send one of $M$ equally-likely %
messages.
To encode a message $m$, she applies the encoding map 
$\mathcal{F}_{T_A \to A^n}^{(m)}$ to her share $T_A$ of the entanglement resource. This results in a quantum state
$\rho_{A^n T_B}^{(m)} =  (\mathcal{F}_{T_A^n \to A^n}^{(m)}\otimes \mathbb{1}_{T_B})(\Psi_{T_A T_B})$.

Alice decides whether to transmit to Bob (Case 1), or not (Case 0). The innocent state is $\ket{0}$; any other state is non-innocent. 
She does not transmit in Case 0: the channel input is $\ket{0}^{\otimes n}$.
In Case 1, she transmits part of $\rho_{A^n T_B}^{(m)}$ occupying systems $A^n$ through $n$ uses of the covert communication channel $\mathcal{N}_{A\to BW}$. The joint output state is
$\rho_{B^n W^n T_B}^{(m)} = \left(\mathcal{N}_{A \to BW}^{\otimes n} \otimes \mathrm{id}_{T_B}\right)\left(\rho_{A^n T_B}^{(m)}\right)$. 
Bob decodes the message from the reduced output state $\rho_{B^n T_B}^{(m)}=\Tr_{W^n}\left[\rho_{B^n W^n T_B}^{(m)}\right]$ by applying the POVM 
$\mathcal{D}_{B^n T_B}$.%

\begin{remark}
    We assume without loss of generality that the innocent state is represented by $\ket{0}$. However, it is important to note that this choice is arbitrary. Since the depolarizing channel is symmetric with respect to the input state, our findings can easily be extended to any product state $\ket{\psi_\text{idle}}^{\otimes n}$ that corresponds to an idle transmission system.
\end{remark}
 
\begin{remark}
In our achievability analysis, we  identify the entanglement resource $\Psi_{T_A T_B}$ with the product state 
$
\psi_{A_1 A}^{\otimes n}$, as in \eqref{eq:Superposition}. That is, we use entanglement resources such that 
Alice and Bob's entangled systems, $T_A$ and $T_B$, 
consist of $n$ copies of $A$ and $A_1$, respectively.
\end{remark}

 
%
%
%
%
%
%
%
%
%

\subsection{Reliability and Covertness}
We characterize reliability by 
the average probability of decoding error for entanglement-assisted code $(\Psi,\mathcal{F},\mathcal{D})$ defined in Section \ref{sec:code}:
\begin{align}
P_e^{(n)}(\Psi,\mathcal{F},\mathcal{D})=
\frac{1}{M}\sum_{m=1}^M \trace\left[(\mathbb{1}-D_m)\rho_{B^n T_B}^{(m)}\right]
\end{align}
where $\rho_{B^n T_B}^{(m)}$ is the reduced state
of the joint output state. %

Willie does not have access to Alice and Bob's entanglement resource and receives the reduced output state $\rho_{W^n}^{(m)}=\Tr_{B^n T_B}\left[\rho_{B^n W^n T_B}^{(m)}\right]$ occupying the system $W^n$.
Willie has to determine whether  Alice transmitted to Bob. To this end, he performs a binary measurement $\{\Delta_{\mathsf{H0}},\Delta_{\mathsf{H1}}\}$, where the outcome $\mathsf{H1}$ represents the hypothesis that Alice sent information, while $\mathsf{H0}$ indicates the contrary hypothesis.

He fails by either accusing Alice of transmitting when she is not (false alarm), or missing
Alice’s transmission (missed detection).
Denoting the probabilities of these errors by $P_{\text{FA}} =
P(\text{choose~} \mathsf{H1}|\mathsf{H0} \text{~is true})$ and $P_{\text{MD}} = P(\text{choose~} \mathsf{H0}|\mathsf{H1} \text{~is true})$, respectively, and assuming equally
likely hypotheses, %
Willie’s average probability of error is %
$%
    P_w^{(n)}%
    = \frac{P_{\text{FA}} + P_{\text{MD}}}{2}
$. %
A random choice  %
yields an ineffective detector with $P_w^{(n)}%
= \frac{1}{2}$. 
The
goal of covert communication is to design a sequence of codes such that Willie’s detector is
forced to be arbitrarily close to ineffective. 
Denote the average state that Willie receives by
\begin{align}
\overline{\rho}_{W^n}=\frac{1}{M}\sum_{m=1}^M \rho^{(m)}_{W^n} 
\label{Equation:Willie_Average_State}
\end{align}
where $\rho_{W^n}^{(m)}$ is the reduced state
of the joint output  $\rho_{B^n W^n T_B}^{(m)}$.
A sufficient condition \cite{Bash-Quantum, Bash-Quantum-revision} to render \emph{any} detector ineffective for Willie is
$%
D(\overline{\rho}_{W^n}||\omega_0^{\otimes n}) \approx 0
$, %
where $\omega_0\equiv \mathcal{N}_{A\to W}(\ketbra{0})$ is the output corresponding to innocent input.
Formally, an 
$(M,n,\varepsilon,\delta)$-code for entanglement-assisted covert communication satisfies
\begin{align}
P_e^{(n)}(\Psi,\mathcal{F},\mathcal{D})\leq \varepsilon
\intertext{and}
D(\overline{\rho}_{W^n}||\omega_0^{\otimes n})\leq \delta \,.
\label{Equation:Covertness_Requirement}
\end{align}

\subsection{Capacity}
\label{Subsection:Rate}
In traditional communication problems, the coding rate is defined as 
$R=\frac{\log(M)}{n}$, i.e., the number of bits per channel use. %
In covert communication, however, the best achievable rate is zero, since the number of information bits  is sublinear in $n$. Here we prove that entanglement assistance allows reliable transmission of
$\log(M)=O(\sqrt{n}\log{n})$ covert bits. Hence, 
the covert coding rate is characterized as in \cite{Bash-Bosonic}:
\begin{align}
    L &=  \frac{\log(M)}{\sqrt{\delta n}\log{n}}.
    \label{Equation:covert_rate_definition_delta}
\end{align}
where $\delta$ is the covertness level in \eqref{Equation:Covertness_Requirement}.

\begin{definition}
\label{Definition:Achievable_Rate}
A covert rate $L> 0$ is achievable with entanglement assistance  if for every $\varepsilon,\delta>0$, and sufficiently large $n$, there exists a
$(2^{L\sqrt{\delta n}\log{n}},n,\varepsilon,\delta)$  code. %
\end{definition}

\begin{remark}
Achievable rates  correspond to error and covertness levels that tend to zero in the limit of $n\to\infty$.
That is, one may rewrite 
Definition~\ref{Definition:Achievable_Rate}
as follows \cite{Bash-Quantum}.
A rate $L$ is asymptotically achievable if there exists a sequence of codes such that 
\begin{align}
       \frac{\log(M)}{\log{n}\sqrt{nD(\overline{\rho}_{W^n}||\omega_0^{\otimes n}})}
     &\geq L-\zeta_n
     \quad \forall n\geq n_0\label{Equation:covert_rate_definition}
 \intertext{
 for some $n_0>0$ and sequence $\zeta_n$ that tends to zero as $n\to\infty$,
 while the error probability  satisfies}
\lim_{n\to\infty} P_e^{(n)}(\Psi,\mathcal{F},\mathcal{D})&=0\,,
\intertext{and the covertness,}
\lim_{n\to\infty} D(\overline{\rho}_{W^n}||\omega_0^{\otimes n})&=0 \,.
 \end{align}

\end{remark}

\begin{definition}
\label{Definition:Capacity}
The entanglement-assisted covert capacity is defined as the supremum of achievable covert rates. We denote this capacity by $C_{\text{cov-EA}}(\mathcal{N})$, where the subscript stands for  covert communication with entanglement assistance. 
\end{definition}

Consider the following state, with $\alpha\in [0,1]$: 
\begin{align}
\varphi_{\alpha} &\equiv (1-\alpha) \ketbra{0} + \alpha\ketbra{1} \,.
\label{Equation:Bash_Quantum_State}
\end{align}
Let $\gamma_n=o(1)\cap\omega\left(\frac{(\log n)^{\frac{4}{3}}}{n^{1/6}}\right)$, that is, as $n\to\infty$, $\gamma_n \to 0$ and $\frac{n^{1/6}\gamma_n}{\log n} \to +\infty$.
Choosing $\alpha=\alpha_n$ where
\begin{align}
     \alpha_n&\equiv\frac{\gamma_n}{\sqrt{n}}\
     \label{Equation:alpha_definition}
 \end{align}
 ensures covertness \cite{Bash-Quantum, Bash-Quantum-revision}.
That is, if the average state of the input system $A^n$ is given by
$%
\rho_{A^n}=(\varphi_{\alpha_n})^{\otimes n} %
$, %
then the covertness requirement \eqref{Equation:Covertness_Requirement} is satisfied for  large $n$.

\section{Results}
\label{Section:Results}
We address three scenarios presented in Section~\ref{Subsection:channel_model}. We begin with the case where Willie receives the entire environment, i.e., both $E_1$ and $E_2$. This can be viewed as the worst-case scenario (see Remark~\ref{Remark:Worst_Case}).

\begin{theorem}
\label{Theorem:willie_recievs_all_E}
Covert communication is impossible in \mbox{Scenario 1}. Hence, if $W=(E_1,E_2)$, then $C_{\text{cov-EA}}(\mathcal{N})=0$.
\end{theorem}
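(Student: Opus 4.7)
The strategy is to show that the covertness constraint forces Alice's average channel-input state to collapse to $\ket{0^n}$, leaving no freedom to carry a message. The structural observation is that $\omega_0=\mathcal{N}_{A\to W}(\ketbra{0})$ is rank-deficient on $\mathcal{H}_{E_1}\otimes\mathcal{H}_{E_2}$. Applying the dilation \eqref{Equation:Depolarizing_Stinespring} yields $V\ket{0}=\ket{0}_B\otimes|\phi_0\rangle+\ket{1}_B\otimes|\phi_1\rangle$, with $|\phi_0\rangle=\sqrt{1-3q/4}\ket{00}+\sqrt{q/4}\ket{10}$ and $|\phi_1\rangle=\sqrt{q/4}(\ket{01}+i\ket{11})$, so that $\mathrm{supp}(\omega_0)=\mathrm{span}(|\phi_0\rangle,|\phi_1\rangle)$ is only two-dimensional. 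Similarly, $V\ket{1}=\ket{0}_B\otimes|\phi'_1\rangle+\ket{1}_B\otimes|\phi'_0\rangle$ with $|\phi'_1\rangle=\sqrt{q/4}(\ket{01}-i\ket{11})$, and a direct check gives $\langle\phi_0|\phi'_1\rangle=\langle\phi_1|\phi'_1\rangle=0$, placing $|\phi'_1\rangle$ in the orthogonal complement of $\mathrm{supp}(\omega_0)$.

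Because the covertness requirement $D(\overline{\rho}_{W^n}\|\omega_0^{\otimes n})\le\delta<\infty$ forces $\mathrm{supp}(\overline{\rho}_{W^n})\subseteq\mathrm{supp}(\omega_0)^{\otimes n}$, taking partial traces over all tensor factors but one shows $\mathcal{N}_{A\to W}(\overline{\rho}_{A_i})\in\mathrm{supp}(\omega_0)$ for every $i$, where $\overline{\rho}_{A_i}$ is the $i$-th qubit marginal of $\overline{\rho}_{A^n}=\Tr_{T_B}[\overline{\rho}_{A^n T_B}]$. Writing a general qubit state as $\rho_A=(1-\alpha)\ketbra{0}+\alpha\ketbra{1}+\beta\ket{0}\bra{1}+\beta^*\ket{1}\bra{0}$, a short calculation gives $\mathcal{N}_{A\to W}(\rho_A)|\phi'_1\rangle=\tfrac{q}{2}\bigl(\alpha|\phi'_1\rangle+\beta|\phi_0\rangle\bigr)$. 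Since $|\phi'_1\rangle\perp\mathrm{supp}(\omega_0)$ this image must vanish, and the linear independence of $|\phi_0\rangle$ and $|\phi'_1\rangle$ then forces $\alpha=\beta=0$. Hence each $\overline{\rho}_{A_i}=\ketbra{0}$, and because all single-qubit marginals are pure, $\overline{\rho}_{A^n}=\ketbra{0^n}$.

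Since $\ketbra{0^n}$ is extremal in the set of density operators, each message-conditional state must satisfy $\rho^{(m)}_{A^n}=\ketbra{0^n}$. Using the standard identity $\Tr_{A^n}\bigl[(\mathcal{F}_{T_A\to A^n}^{(m)}\otimes\mathbb{1}_{T_B})(\Psi_{T_A T_B})\bigr]=\Tr_{T_A}[\Psi_{T_A T_B}]\equiv\psi_{T_B}$, the joint Alice--Bob state factorizes as $\rho^{(m)}_{A^n T_B}=\ketbra{0^n}\otimes\psi_{T_B}$, and therefore $\rho^{(m)}_{B^n T_B}=\mathcal{N}_{A\to B}^{\otimes n}(\ketbra{0^n})\otimes\psi_{T_B}$ is independent of $m$. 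Consequently no decoder can do better than random guessing, $P_e^{(n)}\ge 1-1/M$, so any reliability threshold $\varepsilon<\tfrac{1}{2}$ forces $M=1$. Hence $\log M=0$ for every admissible code, giving $L=0$ and $C_{\text{cov-EA}}(\mathcal{N})=0$.

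The delicate step I anticipate is propagating the support constraint correctly from the $n$-letter output to single-letter marginals and then lifting $\overline{\rho}_{A^n}=\ketbra{0^n}$ to each individual $\rho^{(m)}_{A^n}$, while simultaneously ruling out that Bob's entangled share $T_B$ could help him distinguish messages when Alice's transmission is forced to be $\ket{0^n}$; the CPTP invariance of the $T_B$ marginal under any encoding on $T_A$ is what closes this loophole.
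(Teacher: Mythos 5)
Your proposal is correct, and its core observation is the same as the paper's: $\omega_0=\mathcal{N}_{A\to W}(\ketbra{0})$ has rank $2$ on the four-dimensional environment and $\omega_1$ has support outside $\mathrm{supp}(\omega_0)$, so covertness (finite $D(\overline{\rho}_{W^n}\Vert\omega_0^{\otimes n})$) is incompatible with any non-innocent transmission. The paper establishes this by writing out the $4\times 4$ matrices $\omega_0,\omega_1$ and exhibiting orthogonal null vectors $\ket{e_0},\ket{e_1}$, then simply asserts that Willie can detect a non-zero transmission with certainty; you instead read the same fact off the Stinespring vectors $V\ket{0}$ and $V\ket{1}$ via the orthogonality relations $\langle\phi_0|\phi'_1\rangle=\langle\phi_1|\phi'_1\rangle=0$, which is cleaner and equivalent. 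Where your write-up genuinely adds value is in completing the chain the paper leaves implicit: the support-containment consequence of finite relative entropy, its propagation to single-letter marginals, the computation $\mathcal{N}_{A\to W}(\rho_A)\ket{\phi'_1}=\tfrac{q}{2}(\alpha\ket{\phi'_1}+\beta\ket{\phi_0})$ forcing $\alpha=\beta=0$, the purity/extremality argument pinning every $\rho^{(m)}_{A^n}$ to $\ketbra{0^n}$, and the observation that the $T_B$ marginal is encoding-independent so Bob's state carries no message, whence $M=1$ for any $\varepsilon<\tfrac12$. All of these steps check out (including the partial-trace support argument via $P^{\otimes n}\le \mathbb{1}\otimes P\otimes\mathbb{1}$), so your version is a fully rigorous rendering of the theorem rather than a different proof.
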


\begin{proof}[Proof of Theorem \ref{Theorem:willie_recievs_all_E}]
Let $\omega_0$ and $\omega_1$ denote Willie's output states corresponding to the inputs $\ket{0}$ and $\ket{1}$, respectively. That is $\omega_x \equiv \mathcal{N}_{A\to W}(\ketbra{x})$ for $x\in\{0,1\}$.

    In this scenario, we have $\mathrm{supp}(\omega_1)\not\subseteq \mathrm{supp}(\omega_0)$. We show this in detail in Appendix \ref{Subsection:willie_recives_all_E}
    . Therefore,  Willie can perform a measurement to detect a non-zero transmission with certainty.
\end{proof}
    Essentially, in Scenario 1, Willie's  entanglement with the transmitted qubit is strong enough for him to detect any encoding operation. %

Next, we
consider another extreme setting. %

\begin{theorem}
\label{Theorem:willie_recievs_second_qubit}
Covert communication is trivial in Scenario 2.
That is, if
     $W=E_2$, then Alice can communicate unconstrained by the covertness requirement, and transmit $O(n)$ bits. 
\end{theorem}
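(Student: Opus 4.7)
The plan is to exhibit an entanglement-assisted scheme whose induced state at Willie's register is, message by message, exactly the innocent product state $\omega_0^{\otimes n}$. Covertness will then hold with equality and value zero, and reliability will be inherited from the unconstrained entanglement-assisted classical capacity theorem applied to $\mathcal{N}_{A\to B}$, yielding $\Theta(n)$ transmitted bits.

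The first step, and the one carrying most of the technical content, is to work out Willie's marginal channel $\mathcal{N}_{A\to E_2}$ directly from the Stinespring isometry in \eqref{Equation:Depolarizing_Stinespring}. I would expand $V\rho V^\dagger$ as a sum over Pauli pairs $(\sigma,\tau)$ indexing the environment kets $\ket{e_\sigma}$, and trace out $B$ and $E_1$. Only pairs with matching $E_1$-label survive, namely $(\sigma,\tau)\in\{\mathbb{1},X\}^2\cup\{Y,Z\}^2$. The key property I would extract is that the diagonal of $\mathcal{N}_{A\to E_2}(\rho)$ is the input-independent vector $(1-q/2,\,q/2)$, while the off-diagonals depend on the input only through $\Tr[X\rho]$; in particular $\mathcal{N}_{A\to E_2}(\rho)=\omega_0$ whenever $\Tr[X\rho]=0$.

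With this in hand, I would take the shared entangled resource $\Psi_{T_A T_B}$ to be $n$ Bell pairs, and restrict each encoding $\mathcal{F}^{(m)}_{T_A\to A^n}$ to be a unitary on Alice's share $T_A$. Since a local unitary on one half of a maximally entangled state preserves the other marginal as maximally mixed, the reduced state on $A^n$ equals $(\mathbb{1}/2)^{\otimes n}$ for every $m$. Because $\Tr[X\cdot \mathbb{1}/2]=0$, applying the previous observation per channel use gives $\rho_{W^n}^{(m)}=\omega_0^{\otimes n}$ for every $m$, whence $\overline{\rho}_{W^n}=\omega_0^{\otimes n}$ and $D(\overline{\rho}_{W^n}\|\omega_0^{\otimes n})=0$ for every $n$, irrespective of the codebook.

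For reliability I would invoke the Bennett--Shor--Smolin--Thapliyal entanglement-assisted classical capacity theorem applied to $\mathcal{N}_{A\to B}$ with input state $\rho_A=\mathbb{1}/2$: for any $0<q<1$ the capacity $C_E(\mathcal{N}_{A\to B})$ is strictly positive, so there exist unitary encoders and a joint decoding POVM $\mathcal{D}_{B^n T_B}$ transmitting $\log M = n C_E - o(n)$ bits with vanishing error, hence $\Theta(n)$ covert bits. Equivalently, the denominator in \eqref{Equation:covert_rate_definition} vanishes identically, so every $L>0$ is achievable in the sense of Definition~\ref{Definition:Achievable_Rate}. There is no deep obstacle in the argument: the bulk of the work sits in the marginal calculation of the first step, where one must track the complex factors $\Tr[\sigma\rho\tau^\dagger]$ for $(\sigma,\tau)\in\{(Y,Z),(Z,Y)\}$ carefully so that their imaginary parts combine correctly with those from $\{(\mathbb{1},X),(X,\mathbb{1})\}$, making the diagonal input-independent and the off-diagonals a clean function of $\Tr[X\rho]$ alone.
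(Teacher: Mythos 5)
Your proposal is correct, and the load-bearing computation is the same as the paper's: expanding the Stinespring isometry of \eqref{Equation:Depolarizing_Stinespring} and tracing out $B$ and $E_1$, one finds (as in Appendix~\ref{Subsection:willie_recives_second_qubit}) that Willie's $E_2$-marginal has input-independent diagonal $(1-\nicefrac{q}{2},\nicefrac{q}{2})$ and off-diagonals proportional to $2\Re\{b\}=\Tr[X\rho]$, hence $\omega_0=\omega_1$. Where you diverge is in the coding layer. The paper stops at $\omega_0=\omega_1$ and observes that Alice can simply send \emph{unassisted} classical codewords in the computational basis: every symbol, innocent or not, produces the same $\omega_0$ at Willie, so $\overline{\rho}_{W^n}=\omega_0^{\otimes n}$ exactly and the ordinary classical capacity (positive for $0<q<1$) gives $O(n)$ bits. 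You instead build an entanglement-assisted scheme (Bell pairs, unitary encoders preserving the maximally mixed marginal on $A^n$, then BSST), using the stronger observation that $\mathcal{N}_{A\to E_2}(\rho)=\omega_0$ whenever $\Tr[X\rho]=0$. Both routes prove the theorem; yours buys the larger rate $C_E$ and exact covertness for an entire class of encoders, at the cost of invoking machinery (BSST achievability with unitary Pauli-type encoders) that the statement does not require, while the paper's version makes the sharper point that Scenario~2 is trivial \emph{even without} the entanglement resource. One minor caution: your reduction $\rho_{W^n}^{(m)}=\mathcal{N}_{A\to W}^{\otimes n}(\rho_{A^n}^{(m)})$ is valid because tracing out $B^n$ and $T_B$ commutes with the product channel, and the conclusion $\omega_0^{\otimes n}$ then needs $\rho_{A^n}^{(m)}$ to be exactly $(\mathbb{1}/2)^{\otimes n}$ (a product state), which your restriction to local unitaries on halves of Bell pairs does guarantee; a generic encoding map with maximally mixed single-copy marginals but correlations across uses would not immediately give a product output, so that restriction is doing real work and should be stated as such.
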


\begin{proof}[Proof of Theorem \ref{Theorem:willie_recievs_second_qubit}]
If $W=E_2$, then Willie receives  $\omega_0 = \omega_1=
 \left(1-\frac{q}{2}\right)\ketbra{0} + \frac{q}{2} \ketbra{1}
 $ (see Appendix \ref{Subsection:willie_recives_second_qubit}).
    In this scenario, even without entanglement assistance,  Alice can transmit classical codewords as in the standard non-covert model, while  Willie cannot discern between zero and non-zero inputs. %
\end{proof}

\begin{figure}[tb]
\centering %
\includegraphics[scale=0.5,trim={4.5cm 3.5cm 0cm 4.5cm},clip]
{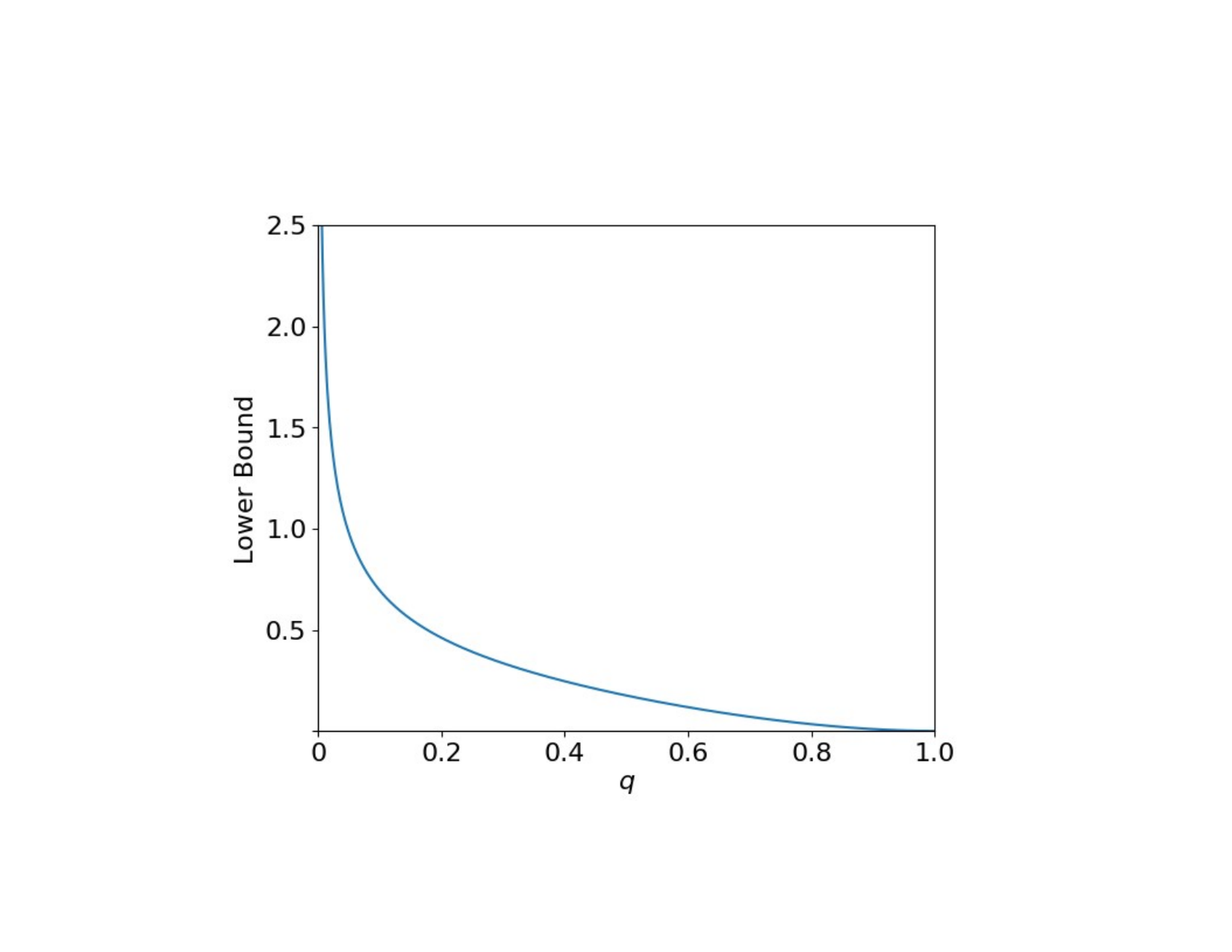} %
\caption{ The lower bound on the entanglement-assisted covert capacity of Scenario 3 in 
Theorem~\ref{Theorem:main_theorem},
as a function of the noise parameter $q$.
}
\label{Figure:L_n}
\end{figure}

We proceed to our main result on the entanglement-assisted covert capacity $C_{\text{cov-EA}}$ of the  depolarizing channel.
From this point on, we focus on Scenario 3, where Willie receives the first qubit of the environment (see Section~\ref{Subsection:channel_model}).
\begin{theorem}
    Consider a qubit depolarizing  channel $\mathcal{N}_{A \to BW}$
    as specified in Section~\ref{Subsection:channel_model} above, where $W=E_1$. The entanglement-assisted covert capacity  is bounded %
    as
    \begin{align}
        C_{\text{cov-EA}}(\mathcal{N}) \geq \frac{4\sqrt{2}}{3} \frac{(1-q)^2 }{(2-q)\sqrt{\eta(\omega_1||\omega_0)}}
    \end{align}
    where $\omega_0 \equiv \mathcal{N}_{A\to W}(\ketbra{0})$ and $\omega_1 \equiv \mathcal{N}_{A\to W}(\ketbra{1})$.
    \label{Theorem:main_theorem}
\end{theorem}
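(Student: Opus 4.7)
The plan is to exhibit a random entanglement-assisted code built from $n$ copies of the weakly entangled resource $\ket{\psi_{A_1 A}}=\sqrt{1-\alpha_n}\ket{00}+\sqrt{\alpha_n}\ket{11}$ of Eq.~\eqref{eq:Superposition}, with Schmidt parameter $\alpha_n = \gamma_n/\sqrt{n}$ from Eq.~\eqref{Equation:alpha_definition}. I would take the pre-shared state to be $\Psi_{T_A T_B} = \ket{\psi_{A_1 A}}^{\otimes n}$, with Alice holding $T_A = A^n$ and Bob holding $T_B = A_1^n$; the partial trace over $T_B$ yields exactly $\varphi_{\alpha_n}^{\otimes n}$, the same marginal used in the unassisted covert scheme of \cite{Bash-Quantum, Bash-Quantum-revision}. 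This identification lets us import the unassisted covertness analysis while the shared entanglement supplies the reliability boost that turns the square-root law into the square-root-log regime.

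I would then verify covertness and reliability separately. For covertness, choose the encoding ensemble $\{\mathcal{F}^{(m)}_{T_A\to A^n}\}_{m=1}^M$ so that its uniform average (nearly) preserves the marginal $\varphi_{\alpha_n}^{\otimes n}$ on $A^n$; since $\rho_{W^n}^{(m)} = \mathcal{N}_{A\to W}^{\otimes n}(\rho_{A^n}^{(m)})$ depends only on Alice's marginal, this yields $\overline{\rho}_{W^n} \approx \omega_{\alpha_n}^{\otimes n}$ with $\omega_{\alpha_n} \equiv \mathcal{N}_{A\to W}(\varphi_{\alpha_n})$. A second-order Taylor expansion of the quantum relative entropy around $\omega_0$ via the functional $\eta(\cdot\|\cdot)$ of Eq.~\eqref{eq:eta} (following \cite{tahmasbi2021covert}) gives $D(\omega_{\alpha_n}\|\omega_0) = \tfrac{1}{2}\alpha_n^2\,\eta(\omega_1\|\omega_0)+O(\alpha_n^3)$, hence $n\,D(\overline{\rho}_{W^n}\|\omega_0^{\otimes n}) = \tfrac{1}{2}\gamma_n^2\,\eta(\omega_1\|\omega_0)+o(1) \leq \delta$ for large $n$. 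For reliability I would invoke a finite-blocklength entanglement-assisted achievability bound (e.g., a position-based or HSW-style construction adapted to the vanishing-rate regime $\alpha_n\to 0$) to obtain $\log M \gtrsim n\,I(A_1;B)_\psi$ at the per-letter state $\psi_{A_1 A}$. The central step is a small-$\alpha$ expansion: writing $I(A_1;B)_\psi = H(A_1)+H(B)-H(A_1 B)$ with $H(A_1)=H_2(\alpha)$, and using purification through the Stinespring isometry of Eq.~\eqref{Equation:Depolarizing_Stinespring} to substitute $H(A_1 B)=H(E_1 E_2)$, a direct entropy calculation should give $I(A_1;B)_\psi = c\,\alpha\log(1/\alpha)+O(\alpha)$ with $c = 8(1-q)^2/(3(2-q))$. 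Substituting $\alpha_n = \gamma_n/\sqrt{n}$ yields $\log M \approx (c/2)\,\gamma_n\sqrt{n}\log n$, which combined with $\sqrt{n\,D}\approx\gamma_n\sqrt{\eta(\omega_1\|\omega_0)/2}$ produces
\begin{align*}
L \geq \frac{c}{\sqrt{2\,\eta(\omega_1\|\omega_0)}} = \frac{4\sqrt{2}\,(1-q)^2}{3(2-q)\,\sqrt{\eta(\omega_1\|\omega_0)}},
\end{align*}
the claimed bound.

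The main obstacles are twofold. First, reconciling reliability with covertness requires an encoding ensemble that both preserves Alice's marginal on $A^n$ and supports a rate close to $I(A_1;B)_\psi$: a generic Weyl-group ensemble would push the marginal toward maximally mixed and destroy covertness, while the $\{I,Z\}^{\otimes n}$ subgroup that does preserve the marginal dephases the superposition in $\ket{\psi_{A_1 A}}$ and loses the $\log n$ boost, so one must design the ensemble carefully---likely through a decoupling-style or position-based argument using small structured unitaries whose average action only perturbs the marginal at a sub-leading order in $\alpha_n$. Second, the sub-constant-rate regime $\alpha_n \to 0$ lies outside the standard asymptotic capacity theorem of Bennett--Shor--Smolin--Thapliyal, so one must invoke a one-shot or second-order achievability bound; and extracting the exact prefactor $c = 8(1-q)^2/(3(2-q))$ will require careful diagonalization of the two-qubit joint output state $(I_{A_1}\otimes V)\ket{\psi_{A_1 A}}$ and tight bookkeeping of sub-leading $O(\alpha)$ terms so they do not contaminate the coefficient of $\alpha\log(1/\alpha)$.
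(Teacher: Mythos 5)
Your overall architecture is the same as the paper's: the weakly entangled per-letter resource $\ket{\psi_{A_1A}}$ with $\alpha_n=\gamma_n/\sqrt{n}$, position-based coding for reliability (the paper's Lemma~\ref{Lemma:num_of_data_bits}, which already resolves your worry about marginal preservation, since each message's channel input is exactly $\varphi_{\alpha_n}^{\otimes n}$), covertness inherited from the unassisted scheme through the reduced input state, and the $\eta$-expansion $D(\omega_{\alpha_n}\|\omega_0)=\tfrac{\alpha_n^2}{2}\eta(\omega_1\|\omega_0)+O(\alpha_n^3)$ in the denominator. However, there is a genuine quantitative error in your central step. The correct small-$\alpha$ expansion is
\begin{align*}
I(A_1;B)_\psi=D(\psi_{A_1B}\|\psi_{A_1}\otimes\psi_B)=\frac{2(1-q)^2}{2-q}\,\alpha\log(1/\alpha)+O(\alpha),
\end{align*}
not $c\,\alpha\log(1/\alpha)$ with $c=8(1-q)^2/(3(2-q))$. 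One can check this quickly: the eigenvalues of $\psi_{A_1B}$ that vanish with $\alpha$ are $\lambda_3=\tfrac{q}{2}\alpha$ and $\lambda_4=\tfrac{q(1-3q/4)}{1-q/2}\alpha+O(\alpha^2)$, so $H(A_1B)=\bigl[\tfrac{q}{2}+\tfrac{q(1-3q/4)}{1-q/2}\bigr]\alpha\log(1/\alpha)+O(\alpha)$, while $H(A_1)=\alpha\log(1/\alpha)+O(\alpha)$ and $H(B)$ contributes no $\alpha\log(1/\alpha)$ term (its eigenvalue $\tfrac{q}{2}\ast\alpha$ tends to $q/2$, not to $0$); the resulting coefficient is $1-\tfrac{q}{2}-\tfrac{q(1-3q/4)}{1-q/2}=\tfrac{2(1-q)^2}{2-q}$.

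The factor $\tfrac{4}{3}$ in the theorem does not come from the mutual-information coefficient but from the choice of $\gamma_n$: the paper takes $\gamma_n=n^{\nu-1/6}$, so that $\log(1/\alpha_n)=(\tfrac{2}{3}-\nu)\log n\to\tfrac{2}{3}\log n$, whereas your step ``$\log M\approx (c/2)\gamma_n\sqrt{n}\log n$'' implicitly uses $\log(1/\alpha_n)\approx\tfrac{1}{2}\log n$, valid only for sub-polynomial $\gamma_n$. Your two errors ($c$ too large by $4/3$, the $\log$ factor too small by $3/4$) cancel numerically, which is why you land on the stated bound, but the derivation as written is not correct: with the true coefficient and your implicit $\gamma_n$ you would only establish $L\geq\sqrt{2}(1-q)^2/\bigl((2-q)\sqrt{\eta(\omega_1\|\omega_0)}\bigr)$, weaker than the theorem by a factor of $4/3$. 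To close the gap you must make the polynomial choice of $\gamma_n$ explicit and track the $(\tfrac{2}{3}-\nu)$ factor, then let $\nu\to 0$, exactly as the paper does in Lemma~\ref{Lemma:Main_Technical} and Appendix~\ref{appendix:Rate_calc}.
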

Note that $\eta(\omega_1||\omega_0)$ is defined in \eqref{eq:eta}. 
Our lower bound is depicted in Figure \ref{Figure:L_n}.
 As can be seen in the figure, our lower bound has the expected behavior for the covert capacity in the boundary points (see Remark \ref{Remark:bounds_of_q}). 
 For $q=0$, we have $C_{\text{cov-EA}}(\mathcal{N})=+\infty$ in the $\sqrt{n}\log{n}$ scale, because the warden only receives noise and Alice can transmit a linear number of information bits (effectively, there is no warden). Whereas, for $q=1$, the covert and non-covert capacities are zero. %

Following the  definitions in Section~\ref{Subsection:Rate}, a bound of the form $C_{\text{cov-EA}}\geq L_0$ implies that it is possible to  transmit
$ L_0 \sqrt{\delta n}\log{n}$ information bits reliably and covertly (see Definitions \ref{Definition:Achievable_Rate} and \ref{Definition:Capacity}). %
Recall that
without entanglement assistance, covert communication requirements limit the message to 
$O(\sqrt{n})$ information bits \cite{Bash-Quantum, Bash-Quantum-revision}.
Thereby, we have established that entanglement assistance increases the message scale in covert communication, from $O(\sqrt{n})$ to $O(\sqrt{n}\log{n})$ information bits.
A similar result has been shown for continuous-variable bosonic channels by
Gagatsos et al. 
\cite{Bash-Bosonic}.
To the best of our knowledge,
our result in Theorem~\ref{Theorem:main_theorem}, on the depolarizing channel, is
  the first demonstration of such a property for a finite-dimensional channel.

\begin{remark}
%
 In some communication settings, the coding scale is larger for continuous-variable channels. For example, in deterministic identification, the code size is super-exponential and scales as $2^{n \log{n} R}$ for Gaussian channels \cite{deterministic_identification_Gaussian_Salariseddigh} and Poisson channels \cite{salariseddigh2022deterministic_poisson}. On the other hand, deterministic identification is limited to an exponential scale for finite-dimensional channels \cite{Ahlswede1989IdentificationVC}. 
Nevertheless, we show here that in covert communication over a qubit depolarizing channel,  entanglement assistance can increase the number of information bits from $O(\sqrt{n} ) $ to $O(\sqrt{n} \log{n}) $, as in the bosonic case. In other words, the $\log{n} $ performance boost is not reserved to continuous variable systems.
\end{remark}

\section{Proof of Theorem \ref{Theorem:main_theorem}}
\subsection{Proof Idea}
Consider Scenario 3 presented in Section~\ref{Subsection:channel_model}. 
First, we identify an entangled state that meets the above condition for covertness.
As opposed to previous works \cite{Bloch,Bash-Quantum,wang2016fundamental}, 
we do \emph{not} encode a random bit sequence with $\sim\nicefrac{1}{\sqrt{n}}$ frequency (or probability) of $1$'s. Instead, we encode ``weakly" entangled states as in (\ref{eq:Superposition}),
such that the squared amplitude of this  quantum superposition of states describing innocent and non-innocent symbols is $\alpha=O\left(\nicefrac{1}{\sqrt{n}}\right)$.
 In order to guarantee covertness, the probability amplitude must be such that the state of the transmission is very close to that of a sequence of innocent states $\ket{0}^{\otimes n}$. 
Furthermore, we adapt the approach in \cite{Bash-Bosonic} %
to
analyze the order of the number of covert information bits.

\subsection{Position-Based Coding}
 
The lemma below provides an achievability result for the transmission over a memoryless quantum channel, regardless of covertness. 
%
%
%
For every $\rho,\sigma\in\mathscr{S}(\mathcal{H})$, define the second and fourth moments of the quantum relative entropy,
\begin{align}
     V(\rho||\sigma) &= \trace[\rho|(\log(\rho)-\log(\sigma ) - D(\rho||\sigma)|^2 ]\,,
     \label{Equation:Second_moment}
     \\
     Q(\rho||\sigma) &= \trace[\rho|(\log(\rho)-\log(\sigma ) - D(\rho||\sigma)|^4 ]\,,
     \label{Equation:Fourth_moment}
\end{align}
respectively.
\begin{lemma}[Position-based coding  {\cite[Lemma~1]{Bash-Bosonic} \cite{anshu2017one,khabbazi2019union,wilde2017position}}]
    Consider a memoryless quantum channel $\mathcal{N}_{A \to B}$.
    For every pure entangled state $\ket{\psi_{A_1 A}}\in \mathcal{H}_{A_1}\otimes\mathcal{H}_{A}$,
    arbitrarily small $\varepsilon>0$, and sufficiently large $n$,
    there exists a coding scheme that employs  pre-shared entanglement resources %
    to transmit $\log(M)$ bits over $n$ uses of $\mathcal{N}_{A \to B}$ with %
    decoding error probability $\varepsilon$ %
    such that:
    \begin{align}
     \log(M) &\geq nD(\psi_{A_1 B}||\psi_{A_1}\otimes \psi_{B})
    \nonumber\\& 
    + \sqrt{nV(\psi_{A_1 B}||\psi_{A_1}\otimes \psi_{B})} \Phi^{-1}(\varepsilon)-C_n
    \label{Equation:num_of_data_bits1}
    \intertext{with}
    \psi_{A_1 B}&=(\mathrm{id}_{A_1}\otimes \mathcal{N}_{A\to B})(\psi_{A_1 A})
    \label{Equation:psi_A1_B}
    \intertext{and}%
    C_n &= \frac {\beta_{\text{B-E}}}{\sqrt{2\pi}} \frac{[Q(\psi_{A_1 B}||\psi_{A_1}\otimes \psi_{B})]^\frac{3}{4}}{V(\psi_{A_1 B}||\psi_{A_1}\otimes \psi_{B})} 
    \nonumber\\&  
    + \frac{V(\psi_{A_1 B}||\psi_{A_1}\otimes \psi_{B})}{\sqrt{2\pi}} + \log(4\varepsilon n)
    \label{Equation:C_n}
    \end{align}
    where    $D(\cdot||\cdot)$ is the quantum relative entropy, $V(\cdot||\cdot)$, $Q(\cdot||\cdot)$ are the second and fourth moments in (\ref{Equation:Second_moment})-(\ref{Equation:Fourth_moment}), $\beta_{\text{B-E}}$ is the \emph{Berry-Esseen} constant satisfying $0.40973 \leq \beta_{\text{B-E}} \leq 0.4784$,
    and 
    \begin{align}
   \Phi^{-1}(\varepsilon) = \sup \{ \varepsilon \in [0,1] |  \Phi(\varepsilon) \leq \varepsilon \}\,, 
   \\
   \intertext{where}
   \Phi(\varepsilon) = \frac{1}{\sqrt{2\pi}}\int_{-\infty}^{\varepsilon}e^{\frac{x^2}{2}}dx\,.
\end{align}
    \label{Lemma:num_of_data_bits}
\end{lemma}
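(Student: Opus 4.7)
\emph{Strategy.}
The plan is to apply Lemma~\ref{Lemma:num_of_data_bits} to the weakly entangled input
$\ket{\psi_{A_1 A}}=\sqrt{1-\alpha_n}\ket{00}+\sqrt{\alpha_n}\ket{11}$
from (\ref{eq:Superposition}) with $\alpha_n=\gamma_n/\sqrt{n}$ as in (\ref{Equation:alpha_definition}).
Both the covertness metric $D(\overline{\rho}_{W^n}\|\omega_0^{\otimes n})$ in the denominator of (\ref{Equation:covert_rate_definition}) and the mutual information $I(A_1;B)_\psi=D(\psi_{A_1 B}\|\psi_{A_1}\otimes\psi_B)$ in Lemma~\ref{Lemma:num_of_data_bits} must be expanded for small $\alpha_n$; the constant $4\sqrt{2}/3$ will emerge from optimizing $\gamma_n$ within the feasibility window $\omega(\log n/n^{1/6})\cap o(1)$.

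\emph{Denominator (covertness).}
Tracing out $A_1$ from $\ket{\psi_{A_1 A}}$ gives $\psi_A=\varphi_{\alpha_n}$ from (\ref{Equation:Bash_Quantum_State}), the unassisted covert ensemble.
Under position-based coding Alice transmits only one of $M$ entangled branches per channel use, so Willie's marginal is message-independent: $\overline{\rho}_{W^n}=[(1-\alpha_n)\omega_0+\alpha_n\omega_1]^{\otimes n}$, and additivity yields $D(\overline{\rho}_{W^n}\|\omega_0^{\otimes n})=nD((1-\alpha_n)\omega_0+\alpha_n\omega_1\|\omega_0)$.
The map $t\mapsto D(\omega_0+t(\omega_1-\omega_0)\|\omega_0)$ has vanishing first derivative at $t=0$ (by positivity of the relative entropy) and second derivative equal to $\eta(\omega_1\|\omega_0)$, which is exactly the quadratic form in (\ref{eq:eta}). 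Hence $D(\overline{\rho}_{W^n}\|\omega_0^{\otimes n})=\tfrac{n\alpha_n^2}{2}\eta(\omega_1\|\omega_0)+O(n\alpha_n^3)=\tfrac{\gamma_n^2}{2}\eta(\omega_1\|\omega_0)+o(1)$, so covertness (\ref{Equation:Covertness_Requirement}) is satisfied for any fixed $\delta>0$ at large $n$, and $\sqrt{nD_n}\sim\gamma_n\sqrt{n\,\eta(\omega_1\|\omega_0)/2}$.

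\emph{Numerator (mutual information).}
Since $\mathcal{N}_{A\to B}(\ketbra{0}{1})=(1-q)\ketbra{0}{1}$, the state $\psi_{A_1 B}$ decomposes in the $\{\ket{00},\ket{01},\ket{10},\ket{11}\}$ basis into a $2\times 2$ block on $\mathrm{span}\{\ket{00},\ket{11}\}$ with off-diagonal $(1-q)\sqrt{\alpha(1-\alpha)}$ and diagonals $(1-\alpha)(1-q/2),\ \alpha(1-q/2)$, together with diagonal entries $(1-\alpha)q/2,\ \alpha q/2$ on $\ket{01},\ket{10}$. Diagonalizing the $2\times 2$ block gives $\lambda_-=\alpha q(4-3q)/[2(2-q)]+O(\alpha^2)$; thus two eigenvalues of $\psi_{A_1 B}$ scale as $\alpha$ and contribute $(-\alpha\log\alpha)$ to $H(\psi_{A_1 B})$ with total coefficient $q(4-3q)/[2(2-q)]+q/2=q(3-2q)/(2-q)$. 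Meanwhile $H(\psi_{A_1})=h_2(\alpha)$ contributes coefficient $1$, and $H(\psi_B)=h_2(q/2+\alpha(1-q))$ is smooth at $\alpha=0$. Subtracting,
\begin{align}
I(A_1;B)_\psi=\Bigl[1-\tfrac{q(3-2q)}{2-q}\Bigr]\alpha\log(1/\alpha)+O(\alpha)=\tfrac{2(1-q)^2}{2-q}\,\alpha\log(1/\alpha)+O(\alpha).
\end{align}
A parallel calculation shows $V$ and $Q$ in Lemma~\ref{Lemma:num_of_data_bits} scale as $\alpha\cdot\mathrm{polylog}(1/\alpha)$, so $\sqrt{nV}\,\Phi^{-1}(\varepsilon)$ and $C_n$ are of order $n^{1/6}\log n$, strictly dominated by $nI(A_1;B)_\psi\sim n^{1/3}\log n$.

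\emph{Combining and main obstacle.}
Choosing $\gamma_n=(\log n)^2/n^{1/6}\in\omega(\log n/n^{1/6})\cap o(1)$ gives $\log(1/\alpha_n)=\tfrac{1}{2}\log n-\log\gamma_n=\tfrac{2}{3}\log n+o(\log n)$. Lemma~\ref{Lemma:num_of_data_bits} then yields $\log M\geq \tfrac{2(1-q)^2}{2-q}\cdot\tfrac{2}{3}\gamma_n\sqrt{n}\log n+o(\gamma_n\sqrt{n}\log n)$, and dividing by $\log(n)\sqrt{nD_n}$ gives
\begin{align}
L\geq \frac{4(1-q)^2}{3(2-q)}\sqrt{\frac{2}{\eta(\omega_1\|\omega_0)}}=\frac{4\sqrt{2}}{3}\cdot\frac{(1-q)^2}{(2-q)\sqrt{\eta(\omega_1\|\omega_0)}},
\end{align}
as claimed. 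I expect the main technical obstacle to be the careful diagonalization of $\psi_{A_1 B}$, tracking the coherence factor $(1-q)$ through $\lambda_-$ (its square is what yields the $(1-q)^2$ that correctly vanishes as $q\to 1$), together with verifying that the optimal choice $\log(1/\alpha_n)\sim\tfrac{2}{3}\log n$ (rather than the na\"ive $\tfrac{1}{2}\log n$, which would produce $\sqrt{2}$ in place of $4\sqrt{2}/3$) lies inside the feasibility window of $\gamma_n$ and is compatible with the Berry--Esseen-type correction terms $V,Q,C_n$ in Lemma~\ref{Lemma:num_of_data_bits} remaining subdominant.
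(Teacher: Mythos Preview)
Your proposal does not prove the stated lemma at all. Lemma~\ref{Lemma:num_of_data_bits} is a one-shot/second-order achievability result for entanglement-assisted communication over a \emph{general} memoryless channel $\mathcal{N}_{A\to B}$: the content is the existence of a position-based coding scheme with sequential decoding whose error probability is controlled via a union bound and a Berry--Esseen correction, yielding the expansion $nD+\sqrt{nV}\,\Phi^{-1}(\varepsilon)-C_n$. The paper does not re-prove it; it is imported from \cite[Lemma~1]{Bash-Bosonic} and \cite{khabbazi2019union,wilde2017position}. A proof would have to construct the shared resource $\Psi_{T_AT_B}=\psi_{A_1A}^{\otimes nM}$, encode by selecting the $m$th block of $A$-systems, and analyze the sequential decoder, invoking the quantum union bound and the Berry--Esseen theorem for the hypothesis-testing relative entropy. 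None of this appears in your write-up.

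What you have written is instead a proof of Theorem~\ref{Theorem:main_theorem}: you \emph{apply} Lemma~\ref{Lemma:num_of_data_bits} to the specific state $\ket{\psi_{A_1A}}=\sqrt{1-\alpha_n}\ket{00}+\sqrt{\alpha_n}\ket{11}$, expand $D$, $V$, $Q$ for small $\alpha_n$, and combine with the covertness expansion $D(\overline{\rho}_{W^n}\|\omega_0^{\otimes n})\sim\tfrac{1}{2}\gamma_n^2\eta(\omega_1\|\omega_0)$ to obtain the covert rate $L$. For that purpose your sketch is essentially the paper's own argument (Lemma~\ref{Lemma:Main_Technical} plus the proof of Theorem~\ref{Theorem:main_theorem}): same diagonalization of $\psi_{A_1B}$, same leading coefficient $2(1-q)^2/(2-q)$ for $I(A_1;B)_\psi$, same choice $\gamma_n=n^{\nu-1/6}$ so that $\log(1/\alpha_n)\sim(\tfrac{2}{3}-\nu)\log n$, and same use of \cite[Lemma~5]{tahmasbi2021covert} for the $\eta$ expansion. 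One point to flag: your claim that $\overline{\rho}_{W^n}$ is exactly the product state $[(1-\alpha_n)\omega_0+\alpha_n\omega_1]^{\otimes n}$ under position-based coding is not literally what the paper uses; it instead invokes the bound \eqref{Equation:Divergence_cqChannel} from \cite{Bash-Quantum,Bash-Quantum-revision} to pass from the actual $\overline{\rho}_{W^n}$ to $nD(\omega_{\alpha_n}\|\omega_0)$.
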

The derivation of Lemma~\ref{Lemma:num_of_data_bits} builds upon a \emph{position-based} coding scheme, where 
each message is associated with $n$ entangled pairs and Bob uses sequential decoding on the output and the entanglement resources for each message consecutively \cite{wilde2017position,Bash-Bosonic} (see proof of Lemma 1 in \cite{Bash-Bosonic}).

\subsection{Analysis}
\label{Subsection:Analysis}
In this section, we give the proof for Theorem \ref{Theorem:main_theorem}.
We present the main stages of the proof, while the technical details are deferred to the appendix.
We begin with the following lemma.
\begin{lemma}
\label{Lemma:Main_Technical}
Let $\gamma_n=n^{\nu-\frac{1}{6}}$, where $0<\nu<\frac{1}{6}$ is arbitrary and does not depend on $n$. 
    Then, there exists an entanglement-assisted covert coding scheme for qubit depolarizing channel with blocklength $n$, size $M$, and average error probability $\varepsilon$ that satisfies
    \begin{align}
        \log(M) \geq 2\left (\frac{2}{3} - \nu \right )\frac{(1-q)^2}{2-q} \gamma_n \sqrt{n} \log{n} + O(\sqrt{n}\gamma_n) \,.
        \label{Equation:lemma_for_main_theorem}
    \end{align}
\end{lemma}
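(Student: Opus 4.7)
My plan is to instantiate Lemma~\ref{Lemma:num_of_data_bits} with the weakly entangled state $\ket{\psi_{A_1 A}} = \sqrt{1-\alpha_n}\,\ket{00} + \sqrt{\alpha_n}\,\ket{11}$ for $\alpha_n = \gamma_n/\sqrt n$, and then extract the leading asymptotic behavior of the relative entropy as $\alpha_n\to 0$. Covertness is obtained essentially for free: tracing out the reference yields $\Tr_{A_1}[\ketbra{\psi_{A_1 A}}] = \varphi_{\alpha_n}$, so with position-based encoding (where each message simply selects which of $M$ independent blocks of $n$ entangled pairs is transmitted) the marginal input on $A^n$ is $\varphi_{\alpha_n}^{\otimes n}$, independent of the message. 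The discussion around \eqref{Equation:alpha_definition}, following \cite{Bash-Quantum, Bash-Quantum-revision}, then guarantees $D(\overline{\rho}_{W^n}||\omega_0^{\otimes n}) \leq \delta$ for large $n$.

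The second step is to expand the information quantities in Lemma~\ref{Lemma:num_of_data_bits} for small $\alpha_n$. Writing $\psi_{A_1 B} = (1-q)\ketbra{\psi}_{A_1 B} + q\,\psi_{A_1}\otimes \tfrac{1}{2}\mathbb{1}_B$, I observe that $\psi_{A_1 B}$ is block-diagonal in the decomposition $\mathrm{span}\{\ket{00},\ket{11}\}\oplus \mathrm{span}\{\ket{01}\}\oplus \mathrm{span}\{\ket{10}\}$, while $\psi_{A_1}\otimes \psi_B$ is diagonal in the computational basis and hence block-diagonal in the same decomposition. The relative entropy $D(\psi_{A_1 B}||\psi_{A_1}\otimes\psi_B)$ therefore splits into three contributions; the two one-dimensional blocks contribute only $O(\alpha_n)$, while diagonalizing the $2\times 2$ block on $\mathrm{span}\{\ket{00},\ket{11}\}$ reveals a small eigenvalue of order $\alpha_n$ that produces a singularity of the form $\tfrac{2(1-q)^2}{2-q}\alpha_n\log(1/\alpha_n)$. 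Analogous expansions give $V = O(\alpha_n\log^2(1/\alpha_n))$ and $Q = O(\alpha_n\log^4(1/\alpha_n))$.

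Substituting $\alpha_n = \gamma_n/\sqrt n$ and $\log(1/\alpha_n) = (2/3-\nu)\log n + o(\log n)$ yields
\begin{align*}
n D(\psi_{A_1 B}||\psi_{A_1}\otimes\psi_B) = 2\left(\tfrac{2}{3}-\nu\right)\frac{(1-q)^2}{2-q}\gamma_n\sqrt n\,\log n + O(\sqrt n\,\gamma_n).
\end{align*}
The Berry--Esseen correction $\sqrt{nV}\,\Phi^{-1}(\varepsilon) = O(n^{1/4}\sqrt{\gamma_n}\,\log n)$ is $o(\sqrt n\,\gamma_n)$ whenever $\nu>0$, and $C_n = O(\log n)$ is likewise subdominant, so combining these pieces gives \eqref{Equation:lemma_for_main_theorem}.

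The main obstacle is the Taylor expansion of $D$ within the $2\times 2$ block. Because $\psi_{A_1 B}$ and $\psi_{A_1}\otimes\psi_B$ have different eigenbases there, pinning down the exact coefficient $2(1-q)^2/(2-q)$ of the $\alpha_n\log(1/\alpha_n)$ singularity requires explicit diagonalization of both operators, careful small-$\alpha_n$ expansions of their eigenvalues, and cancellation of the $\log\alpha_n$ terms between the 2D block and the two one-dimensional blocks. This coefficient encodes how much of the coherence in $\ket{\psi_{A_1 A}}$ survives the depolarizing channel relative to the classical tensor product $\psi_{A_1}\otimes\psi_B$, which is precisely what entanglement assistance exploits to beat the square-root scaling.
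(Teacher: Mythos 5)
Your proposal is correct and follows essentially the same route as the paper: instantiate the position-based coding lemma with the weakly entangled state $\sqrt{1-\alpha_n}\ket{00}+\sqrt{\alpha_n}\ket{11}$, obtain covertness from the fact that the reduced input equals $\varphi_{\alpha_n}$ of the unassisted scheme, and extract the leading $-2\frac{(1-q)^2}{2-q}\alpha_n\log(\alpha_n)$ term of the relative entropy (together with the moment bounds) by diagonalizing the $2\times 2$ block on $\mathrm{span}\{\ket{00},\ket{11}\}$. The only cosmetic difference is your slightly looser bound $Q=O(\alpha_n\log^4(1/\alpha_n))$ versus the paper's $O(\alpha_n\log^2(\alpha_n))$, which is still sufficient for $C_n$ to be subdominant.
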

\begin{proof} 
To prove the lemma, we need to show that,
for arbitrarily small $\varepsilon,\delta>0$ and  large $n$, there exists an $(M,n,\varepsilon,\delta)$  code for  the depolarizing channel with entanglement assistance, 
with a code size $M$ as in (\ref{Equation:lemma_for_main_theorem}).
To this end, we apply Lemma~\ref{Lemma:num_of_data_bits} with %
$\ket{\psi_{A_1 A}}$ as in (\ref{eq:Superposition}), with a parameter $\alpha=\alpha_n$ as in (\ref{Equation:alpha_definition}). 
Note that setting $\gamma_n=n^{\nu-\frac{1}{6}}$ as in the lemma statement yields
 \begin{align}
 \alpha_n=\frac{\gamma_n}{\sqrt{n}}=n^{\nu-\frac{2}{3}}.
\label{Equation:alpha_n_nu6}
 \end{align}
Intuitively,
as the value of $\alpha_n$ is  small,  the input state that Alice sends through the channel is close to the innocent state, i.e.,  $\psi_A \approx \ketbra{0}$.
Given the joint state $\psi_{A_1 A} \equiv \ketbra{\psi_{A_1 A}} $, the channel input $A$  is in the reduced state $\psi_A\equiv \Tr_{A_1}\left[\ketbra{\psi_{A_1 A}}\right]=\varphi_{\alpha_n}$, with $\varphi_{\alpha_n}$ as in (\ref{Equation:Bash_Quantum_State}).
That is, the reduced input state fits the achievability proof for the covert capacity without entanglement assistance in  \cite{Bash-Quantum, Bash-Quantum-revision}. %
Based on the analysis therein, this input state meets the covertness requirement.  %
As the covertness requirement does not involve the entanglement resources, it follows that covertness holds here as well, i.e., $D(\overline{\rho}_{W^n}||\omega_0^{\otimes n})$ tends to zero
as $n\to \infty$.

Having established both reliability and covertness, it remains to estimate the code size. To this end, consider
the joint state $\psi_{A_1 B}$ of the output system $B$ and the reference system $A_1$, as in (\ref{Equation:psi_A1_B}).
In order to
estimate each term on the right-hand side of (\ref{Equation:num_of_data_bits1}),
we first  derive expressions for 
the operator logarithms, $\log(\psi_{A_1 B})$  and $\log(\psi_{A_1}\otimes \psi_{B})$, and then 
we approximate the relative entropy
$ D(\psi_{A_1 B} || \psi_{A_1}\otimes \psi_{B})$,
and its second and fourth moments
$ V(\psi_{A_1 B} || \psi_{A_1}\otimes \psi_{B})$
and
$ Q(\psi_{A_1 B} || \psi_{A_1}\otimes \psi_{B})$.

The full technical details are given in the appendices. In Appendix \ref{appendix:log_calc},
we analyze the spectral decompositions, and then use the Taylor expansions   near $\alpha = 0$. Throughout the derivation, we maintain the exact value of the dominant terms and reduce the approximation error to its order class, following the asymptotic notation  in Section \ref{sec:notation}.  
In Appendix \ref{appendix:moments_calc},
we estimate the quantum relative entropy  and its  moments, and show that
\begin{align}
    D(\psi_{A_1 B} || \psi_{A_1}\otimes \psi_{B}) &= -2\frac{(1-q)^2}{2-q}\alpha_n\log(\alpha_n) + O(\alpha_n) \,,
    \nonumber\\
    V(\psi_{A_1 B} || \psi_{A_1}\otimes \psi_{B}) &= O\big(\alpha_n\log^2(\alpha_n)\big) \,,
    \nonumber\\
    Q(\psi_{A_1 B} || \psi_{A_1}\otimes \psi_{B}) &= O\big(\alpha_n\log^2(\alpha_n)\big) \,,
\end{align}
for $\alpha=\alpha_n$ as chosen above (see  (\ref{Equation:alpha_n_nu6})).

The proof is concluded
by placing the approximations above into (\ref{Equation:num_of_data_bits1}), as detailed in Appendix \ref{appendix:Rate_calc}. 
\end{proof}

We are now ready for the proof  of Theorem \ref{Theorem:main_theorem}.
\begin{proof}[Proof of Theorem \ref{Theorem:main_theorem}]

First, we observe that in this scenario, $\mathrm{supp}(\omega_1)\subseteq \mathrm{supp}(\omega_0)$ and, in addition, $\omega_0 \neq \omega_1$ (see derivation in Appendix \ref{Subsection:willie_recives_first_qubit}), therefore,
covert communication is possible and not trivial. Then, even if  Willie's output state is $\omega_1$, there is still ambiguity whether the input is innocent or not.

By Lemma~\ref{Lemma:Main_Technical}, we have established achievability for the following covert rate:
\begin{align}
    L_n &=   \frac{2 \left (\frac{2}{3} - \nu \right )\frac{(1-q)^2}{2-q} \gamma_n + O\left(\frac{\gamma_n}{\log{n}}\right)}{\sqrt{D(\overline{\rho}_{W^n}||\omega_0^{\otimes n}})}  \,.
    \label{Equation:covert_rate_theorem}
\end{align}
We have seen that 
covertness holds as 
the reduced input state 
is the same as the average input in previous code constructions 
\cite{Bash-Quantum, Bash-Quantum-revision}.
Furthermore, the following property extends as well:  
there exists $\zeta_n>0$ such that,
\begin{align}
    |D(\overline{\rho}_{W^n}||\omega_0^{\otimes n})-nD(\omega_{\alpha_n}||\omega_{0})| \leq e^{-\zeta_n\gamma_n^\frac{3}{2}n^\frac{1}{4}}\,,
    \label{Equation:Divergence_cqChannel}
\end{align}
where
$\overline{\rho}_{W^n}$ is the actual state of Willie's system as defined in
(\ref{Equation:Willie_Average_State}), the state
$\omega_0=\mathcal{N}_{A\to W}(\ketbra{0})$ is the Willie's output corresponding to the innocent input, and
$\omega_{\alpha_n} \equiv \mathcal{N}_{A \to W}(\varphi_{\alpha_n})$, with $\varphi_{\alpha_n}$ as in (\ref{Equation:Bash_Quantum_State})
and $\zeta_n\in \omega\left(\frac{1}{\log^2 n} \right)$
(see achievability proof in \cite{Bash-Quantum}, \cite[Theorem~1]{Bash-Quantum-revision}).
This holds since the derivation depends on the reduced input state alone, as
Willie does not have access to the entanglement resource. 

Based on a result that was recently developed for covert sensing using entangled states \cite[Lemma~5]{tahmasbi2021covert}, 
\begin{align} 
D(\omega_{\alpha_n}||\omega_{0}) = \frac{\alpha_n^2}{2}\eta(\omega_1||\omega_0) + O(\alpha_n^3)
 \label{Equation:Divergence_Sensing}
\end{align}
 for sufficiently small $\alpha_n$.
Thus, by (\ref{Equation:Divergence_cqChannel}) and (\ref{Equation:Divergence_Sensing}),
\begin{align}
    D(\overline{\rho}_{W^n}||\omega_0^{\otimes n}) \leq \frac{\gamma_n^2}{2}\eta(\omega_1||\omega_0)  + e^{-\zeta\gamma_n^\frac{3}{2}n^\frac{1}{4}} 
    +O\left(\frac{\gamma_n^3}{\sqrt{n}} \right)\,.
\end{align}

By applying this bound to the denominator in  (\ref{Equation:covert_rate_theorem}), we have:
\begin{align}
    L_n \geq   \frac{2 \left (\frac{2}{3} - \nu \right )\frac{(1-q)^2}{2-q} \gamma_n + O\left(\frac{\gamma_n}{\log{n}}\right)}{\sqrt{\frac{\gamma_n^2}{2}\eta(\omega_1||\omega_0)  + e^{-\zeta\gamma_n^\frac{3}{2}n^\frac{1}{4}} 
    +O\left(\frac{\gamma_n^3}{\sqrt{n}} \right)}}
    \,.
\end{align}
Hence, in
 the limit of $n \to \infty$, we achieve 
 \begin{align}
    L
    \geq   
    \frac{2 \left (\frac{2}{3} - \nu \right )\frac{(1-q)^2}{2-q} }{\sqrt{\frac{1}{2}\eta(\omega_1||\omega_0)}} 
 \end{align}
for arbitrarily small $\nu>0$, which completes the proof.
\end{proof}

\section{Energy Constraint Interpretation}
\label{Section:Energy Constraint Interpretation}
We provide an interpretation of the logarithmic advantage. 
 %
In the bosonic case, %
the ratio between the entanglement-assisted capacity and the unassisted capacity 
follows a logarithmic trend of 
$\log(1/E)$, where $E$ is the limit on the transmission mean photon number \cite{holevo2003entanglement,9173940}.
Yet, to ensure covertness, the mean photon number must be restricted to %
$E_n=O(\frac{1}{\sqrt{n}})$. 
Consequently, an $%
O(\log{n})$ factor arises \cite{Shi_2020}.
Based on our derivation,  a similar phenomenon is observed for the qubit depolarizing channel.

Indeed, consider communication over a finite-dimensional channel under an energy constraint, $E$, without the covertness constraint \cite[Sec. 2]{holevo2003entanglement}. 
Then, the capacities with and without entanglement assistance, are given by \cite{holevo2003entanglement}
\begin{align}
C_0(\mathcal{N},E)&=\max_{\{ p_x(x), \phi_A^{(x)} \}: \trace(\mathsf{F}\rho_A)\leq E } I(X;B)_\rho
\label{Eqution:Unassisted_Capacity}
\\
C_{\text{EA}}(\mathcal{N},E)&=\max_{ \psi_{A_1 A} : \trace(\mathsf{F}\psi_A)\leq E } I(A_1;B)_\omega
\label{Equation:EA_Capacity}
\end{align}
with the observable (Hamiltonian) 
$\mathsf{F}=\ketbra{1}$, where
\begin{align}
\rho_{XA}&=\sum_{x\in\mathcal{X}} p_X(x) \ketbra{x}\otimes \phi_A^{(x)}
 \,,
\\
\rho_{XB}&=(\mathrm{id}_X\otimes \mathcal{N}_{A\to B})(\rho_{XA}) \,,
\intertext{and}
\omega_{A_1 B}&=(\mathrm{id}_X\otimes \mathcal{N}_{A\to B})(\psi_{A_1 A})
\end{align}
The maximization in 
\eqref{Eqution:Unassisted_Capacity} is over all the input ensembles $\{ p_x(x), \phi_A^{(x)}\}$ such that the reduced average state $\rho_{A}\equiv \trace_X(\rho_{XA})$ satisfies the energy constraint
$\trace(\mathsf{F}\rho_A)\leq E$.
Similarly, the maximization in \eqref{Equation:EA_Capacity} is over all the entangled input states $\ket{\psi_{A_1 A}}$ with a reduced state $\psi_A$ such that 
$\trace(\mathsf{F} \psi_A)\leq E$.


Now, consider the qubit depolarizing channel with an energy constraint $E$, where $0<E\leq \frac{1}{2}$.
Without assistance, the ensemble that achieves the maximum is 
 $\left\{ \left( 1-E,E \right), \ket{0},\ket{1}  \right\}$.
The capacity without entanglement assistance is thus given by 
\begin{align}
C_0(\mathcal{N},E)&=h_2\left(E*\frac{q}{2}\right)-h_2\left(\frac{q}{2}\right) \,,
\label{Equation:Energy_Capacity_0}
\end{align}  
%
where `$*$' denotes the binary convolution operation: $\alpha*\beta=(1-
\alpha)\beta+\alpha(1-\beta)$.

As for the entanglement-assisted capacity, the maximum is attained for
 \begin{align}
\ket{\Psi_{A_1 A}} &\equiv \sqrt{1-E}\ket{00} + \sqrt{E}\ket{11} \,.
\end{align}
 %
Therefore,
\begin{align}
 C_\text{EA}(\mathcal{N},E)&=
 h_2(E)+h_2\left(E*\frac{q}{2}\right)-H(\psi_{A_1 B}) 
 \,.
\end{align} 
where
\begin{align}
\psi_{A_1 B}&= (\mathrm{id}\otimes\mathcal{N}_{A\to B})(\ketbra{\Psi_{A_1 A}}) \,.
\end{align}
For completeness, we prove the capacity characterizations above in Appendix~\ref{Appendix:Energy_Constraint}.

Now, based on the derivations in Subsection~\ref{Subsection:Analysis}, 
for $E \to 0$,   
 we have
\begin{align}
    \frac{ C_{\text{EA}}(\mathcal{N},E)}{C_{0}(\mathcal{N},E)} &= \frac{h_2(E)+h_2\left(E*\frac{q}{2}\right)-H(\psi_{A_1 B})}{h_2\left(E*\frac{q}{2}\right)-h_2\left(\frac{q}{2}\right)} 
    \nonumber\\
    &\sim\frac{-E\log(E)}{E} 
    \nonumber\\
    &= -\log(E) \,,
\end{align}
by taking $\alpha=E$.
To satisfy  the covert constraint, we effectively 
impose an energy constraint $E_n \sim \frac{1}{\sqrt{n}}$, which results in the following ratio between the entanglement-assisted and unassisted covert capacities, 
\begin{align}
    \frac{ C_{\text{EA-cov}}(\mathcal{N})}{C_{0\text{-cov}}(\mathcal{N})} \sim \log{n} \,.
\end{align}

\section{Summary and Discussion}
We have studied covert communication through the qubit depolarizing channel, where Alice and Bob share entanglement resources and wish to communicate, while an adversarial warden, Willie, tries to detect their communication. We addressed 
three scenarios. In the first scenario, Willie can determine with certainty whether  Alice has transmitted a non-innocent state, making covert communication impossible. In the second, Willie cannot distinguish between the $\ket{0}$ and $\ket{1}$ inputs, making covert communication effortless.
The outcomes of our study mainly pertain to the third scenario, wherein covert communication is both feasible and non-trivial.
Our results show that it is possible to transmit $O(\sqrt{n}\log{n})$  bits reliably and covertly. This result surpasses the maximum scaling of $O(\sqrt{n})$ reliable and covert bits in both the classical and  quantum cases without entanglement assistance. %

The square root law for the unassisted cases (both classical and quantum)  was derived for the non-trivial scenario,  in which Bob cannot determine with certainty if Alice sends a non-innocent symbol. However, if Bob has this capability, i.e., $\mathrm{supp}(\mathcal{N}_{A \to B} (\ketbra{1}))  \nsubseteq \mathrm{supp}(\mathcal{N}_{A \to B} (\ketbra{0}))$, then the scaling law becomes $O(\sqrt{n}\log{n})$, even for a classical channel \cite{Bash-Quantum, Bash-Quantum-revision,Bloch}. 
Therefore, it appears that entanglement assistance has a similar effect as granting Bob the capability of identifying a non-innocent transmission with certainty.
We also discussed the energy constraint interpretation in Section~\ref{Section:Energy Constraint Interpretation}, where we have seen that the entanglement-assisted and unassisted capacities under an energy constraint scale as 
$C_{\text{EA}}(\mathcal{N},E)\sim -E\log(E)$ and 
$C_{0}(\mathcal{N},E)\sim E$, respectively, without covertness.
Hence, the ratio between those capacities follows $\log(1/E)$.
The covertness constraint effectively imposes an energy constraint of $E_n\sim \frac{1}{\sqrt{n}}$.
Hence, the ratio between the covert entanglement-assisted and unassisted capacity scales as $\log{n}$.
While the energy constraint interpretation provides another view on this behavior, a full understanding of the effect of entanglement resources on the performance remains elusive. 

A promising future research direction is to consider a more general model, where the covert communication channel is formed by a concatenation of the depolarizing channel $\mathcal{V}_{A\to BE}$ with a general channel $\mathcal{P}_{E\to W}$ to Willie, namely,
$\mathcal{N}_{A\to BW}=(\mathrm{id}_B\otimes \mathcal{P}_{E\to W})\circ \mathcal{V}_{A\to BE}$. Those scenarios are out of scope for the current paper, but it would be interesting to consider in future work.
The amount of entanglement utilized also requires further study. 
Recently,  Wang et al. \cite{bloch_resource_efficient} improved the previous result by Gagatsos et al. \cite{Bash-Bosonic} and showed achievability using $\sim\sqrt{n}$ two-mode squeezed vacuum states, i.e., entanglement of dimension $\sim 2^{\sqrt{n}}$, which is negligible when compared to the code size.
Here, relying on position-based coding, we use 
$n$ qubit pairs per message. It would be worthwhile to explore methods to reduce the entanglement dimension within the finite dimensional setting as well.

Our results can be viewed as a step forward towards understanding covert communication via general quantum channels in the presence of pre-shared entanglement resources. Following the past literature, the preliminary results on entanglement-assisted communication via the depolarizing and erasure channels \cite{bennett1999entanglement} have led to a complete characterization for a general quantum channel \cite{EA_capacity_bennet}. We can only hope to see the same progress in the study of covert communication. 
The quantum erasure channel is another fundamental model in quantum information theory
\cite{bennett1997capacities}, 
 where for an input state $\rho$, Bob receives the original state with probability $1-q$, or an erasure state $\ketbra{e}$, which is orthogonal to the qubit space, with probability $q$. For this channel, Bob can determine that Alice sent $\ket{1}$ with certainty, as $\mathrm{supp}(\mathcal{N}_{A \to B} (\ketbra{1}))  \nsubseteq \mathrm{supp}(\mathcal{N}_{A \to B} (\ketbra{0}))$.
 Thereby,  the scaling law  becomes $O(\sqrt{n}\log{n})$ information bits, even without entanglement resources.
At this point, it remains unclear whether 
this 
scaling can be achieved with entanglement assistance for every quantum  channel that satisfies   $\mathrm{supp}(\mathcal{N}_{A \to W} (\ketbra{1}))  \subseteq \mathrm{supp}(\mathcal{N}_{A \to W} (\ketbra{0}))$.

\section*{Acknowledgment}
%
The authors would like to thank Johannes Rosenberger (Technical University of Munich) for useful discussions. Furthermore, the authors thank the anonymous reviewer for pointing out the interpretation and relation to the asymptotic behavior of the ratio between the entanglement-assisted and unassisted capacities.

\section*{Appendix Organization}
The appendices are organized as follows.
In Appendix \ref{appendix:willies_channel} we provide the technical analysis of the channel from Alice to Willie.
Appendix \ref{appendix:log_calc} presents mathematical tools and derivations for decomposing the operators %
$\psi_{A_1 B}$ and $\psi_{A_1} \otimes \psi_B$, and their logarithms.
In Appendix \ref{appendix:moments_calc}, we provide the detailed  approximation of  $D(\psi_{A_1 B}||\psi_{A_1}\otimes \psi_{B})$ and its moments. Appendix \ref{appendix:Rate_calc} presents the approximation of the code size.

 \begin{appendices}

 \section{Willie's Channels}
\label{appendix:willies_channel}
\subsection
[Willie Receives E1 E2] 
{Willie Receives $(E_1,E_2)$} 
\label{Subsection:willie_recives_all_E}
For the given scenario where Willie receives the entire environment, it is possible to demonstrate that,
\begin{align}
   \omega_0 =
    \begin{pmatrix}
        1-\frac{3q}{4} & 0 & 0 & \sqrt{\frac{q}{4}(1-\frac{3q}{4})}\\
        0 & \frac{q}{4} & -i\frac{q}{4}& 0\\
        0 & i\frac{q}{4} & \frac{q}{4} & 0\\
        \sqrt{\frac{q}{4}(1-\frac{3q}{4})} & 0 & 0 & \frac{q}{4}
    \end{pmatrix} \,,
\end{align}
and
\begin{align}
   \omega_1 =
    \begin{pmatrix}
        1-\frac{3q}{4} & 0 & 0 & -\sqrt{\frac{q}{4}(1-\frac{3q}{4})}\\
        0 & \frac{q}{4} & i\frac{q}{4}& 0\\
        0 & -i\frac{q}{4} & \frac{q}{4} & 0\\
        -\sqrt{\frac{q}{4}(1-\frac{3q}{4})} & 0 & 0 & \frac{q}{4}
    \end{pmatrix} \,.
\end{align}

The null spaces of $\omega_0$ and $\omega_1$ contain vectors,
\begin{align}
        \ket{e_0} \equiv  \begin{pmatrix}
        0\\
        i\\
        1\\
        0
    \end{pmatrix} \,,
\end{align}
and
\begin{align}
        \ket{e_1} \equiv  \begin{pmatrix}
        0\\
        -i\\
        1\\
        0
    \end{pmatrix} \,,
\end{align}
respectively.
Since $\braket{e_0}{e_1} = 0$, it follows that $\mathrm{supp}(\omega_1)\not\subseteq \mathrm{supp}(\omega_0)$.

\subsection
[Willie Receives E2] 
 {Willie Receives $E_2$} 
\label{Subsection:willie_recives_second_qubit}
Suppose Alice transmits the general state $\rho = (1-a)\ketbra{0} + a \ketbra{1} + b \ketbra{0}{1} + b^* \ketbra{1}{0}$. Then, Willie receives the state,
\begin{align}
     \mathcal{N}_{A\to W}(\rho) &=  \left(1-\frac{q}{2}\right)\ketbra{0} + \frac{q}{2} \ketbra{1}
    \nonumber\\& 
    \phantom{=}+ 2\Re\{b\} \bigg ( \left(\sqrt{\left(1-\frac{3q}{4}\right)\frac{q}{4}} + i\frac{q}{4}\right)\ketbra{0}{1} 
    \nonumber\\&
    \phantom{=}+ \left(\sqrt{\left(1-\frac{3q}{4}\right)\frac{q}{4}} - i\frac{q}{4}\right)\ketbra{1}{0} \bigg ) \,.
    \label{Eq:willie_state_E_2}
\end{align}
Substituting $\rho = \ketbra{0}$ and $\rho = \ketbra{1}$ into \eqref{Eq:willie_state_E_2}, respectively, yields:
\begin{align}
    \omega_0 &= \mathcal{N}_{A\to W}(\ketbra{0}) 
    \nonumber\\
    &= \left(1-\frac{q}{2}\right)\ketbra{0} + \frac{q}{2} \ketbra{1} \,,
\end{align}
and
\begin{align}
    \omega_1 &= \mathcal{N}_{A\to W}(\ketbra{1}) 
    \nonumber\\
    &= \left(1-\frac{q}{2}\right)\ketbra{0} + \frac{q}{2} \ketbra{1} \,.
\end{align}

\subsection
[Willie Receives E1] 
{Willie Receives $E_1$} 
\label{Subsection:willie_recives_first_qubit}
Suppose Alice transmits the general state $\rho = (1-a)\ketbra{0} + a \ketbra{1} + b \ketbra{0}{1} + b^* \ketbra{1}{0}$. Then, Willie receives the state,
\begin{align}
     \mathcal{N}_{A\to W} &=  \left(1-\frac{q}{2}\right)\ketbra{0} + \frac{q}{2} \ketbra{1}
    \nonumber\\& 
    \phantom{=}+ (1-2a) \bigg ( \left(\sqrt{\left(1-\frac{3q}{4}\right)\frac{q}{4}} - i\frac{q}{4}\right)\ketbra{0}{1} 
    \nonumber\\&
    \phantom{=}+ \left(\sqrt{\left(1-\frac{3q}{4}\right)\frac{q}{4}} + i\frac{q}{4}\right)\ketbra{1}{0} \bigg )
    \label{Eq:willie_state_E_1}
\end{align}
Substituting $\rho = \ketbra{0}$ and $\rho = \ketbra{1}$ into \eqref{Eq:willie_state_E_1}, respectively, yields:
\begin{align}
    \omega_0 &= \mathcal{N}_{A\to W}(\ketbra{0}) 
    \nonumber\\
    &= \left(1-\frac{q}{2}\right)\ketbra{0}
+
\frac{q}{2}\ketbra{1}
+
\left(\sqrt{1-\frac{3q}{4}}\sqrt{\frac{q}{4}} + i\frac{q}{4}\right) \ketbra{1}{0}
+
\left(\sqrt{1-\frac{3q}{4}}\sqrt{\frac{q}{4}} - i\frac{q}{4}\right) \ketbra{0}{1} \,,
\end{align}
and
\begin{align}
    \omega_1 &= \mathcal{N}_{A\to W}(\ketbra{1}) 
    \nonumber\\
    &= \left(1-\frac{q}{2}\right)\ketbra{0}
+
\frac{q}{2}\ketbra{1}
-
\left(\sqrt{1-\frac{3q}{4}}\sqrt{\frac{q}{4}} + i\frac{q}{4}\right) \ketbra{1}{0}
-
\left(\sqrt{1-\frac{3q}{4}}\sqrt{\frac{q}{4}} - i\frac{q}{4}\right) \ketbra{0}{1} \,.
\end{align}

The determinant of both $\omega_0$ and $\omega_1$ is,
\begin{align}
    |\omega_0| &= |\omega_1| = \frac{3q}{8}(2-q) \,.
\end{align}
Since the determinant is not equal to zero (for $0<q<1$), it follows that $\omega_0$ and $\omega_1$ span the entire qubit space, thus, in particular $\mathrm{supp}(\omega_1)\subseteq \mathrm{supp}(\omega_0)$.

 \section{Matrix Logarithms Estimation}
 \label{appendix:log_calc}

\subsection{Approximation Tools}
%
We  provide the  approximation tools that are used throughout the derivation, using the 
``big $\mathcal{O}$-notation" in 
Section~\ref{sec:notation}.
\begin{itemize}
\item Useful Taylor expansions (at $x=0$):
\begin{align}
    \sqrt{a+bx+cx^2} &= \sqrt{a} + \frac{b}{2\sqrt{a}}x + \mathcal{O}(x^2) \,,
    \label{Equation:sqrt_series}
    \\
    \log(a+bx+cx^2) &= \frac{\ln(a)}{\ln(2)}+ \frac{b}{a\ln(2)}x 
    \nonumber\\& \phantom{=}
    -\frac{b^2-2ac}{a^2\ln(4)}x^2 + \mathcal{O}(x^3) \,,
    \label{Equation:log2_series}
    \\
    \sqrt{x(1-x)} &= \sqrt{x} + \mathcal{O}(x^{\frac{3}{2}}) \,,
    \label{Equation:nondiagonal_series}
    \\
    \frac{x}{\sqrt{x(1-x)}} &= \sqrt{x} + \mathcal{O}(x^{\frac{3}{2}}) \,,
    \label{Equation:lambda4_factor_series}
    \\
    \frac{1}{\sqrt{x(1-x)}} &= \frac{1}{\sqrt{x}} + \mathcal{O}(\sqrt{x}) \,,
    \label{Equation:lambda1_factor_series}
    \\
    \frac{1}{\sqrt{cx+1}} &= 1+ \mathcal{O}(x) \,,
    \label{Equation:lambda4_normalization_series}
    \\
    \frac{1}{\sqrt{\frac{c}{x}+1}} &= \frac{1}{\sqrt{c}}\sqrt{x} + \mathcal{O}(x^\frac{3}{2}) \,.
    \label{Equation:lambda1_normalization_series}
\end{align}
\item The spectral decomposition of a Hermitian operator,
\begin{align}
    P &= a\ketbra{00} + b\ketbra{01} + c\ketbra{10} + d\ketbra{11} 
    \nonumber\\&
    \phantom{=}
    + s(\ketbra{00}{11} + \ketbra{11}{00}) \,
\end{align}
consists of the eigenvalues
\begin{align}
    &\lambda_{1} = \frac{1}{2}\left ( a+d + \sqrt{(a+d)^2-4(ad-s^2)}\right ),
    \nonumber\\
    &\lambda_{4} = \frac{1}{2}\left ( a+d - \sqrt{(a+d)^2-4(ad-s^2)}\right ),
    \nonumber\\
    &\lambda_2 = b,
    \nonumber\\
    &\lambda_3 = c \,.
     \label{Equation:eigenvalues_of_decomposition}
\end{align}
and the associated eigenvectors,
\begin{align}
    &\ket{\lambda_{1}} = C_{1}\left (\widetilde{\lambda}_{1}\ket{00}+\ket{11} \right ),
    \nonumber\\
    &\ket{\lambda_{4}} = C_{4}\left (\widetilde{\lambda}_{4}\ket{00}+\ket{11} \right ),
    \nonumber\\
    &\ket{\lambda_{2}} = \ket{01},
    \nonumber\\
    &\ket{\lambda_3} = \ket{10} \,.
    \label{Equation:eigenvectors_of_decomposition}
\end{align}
where
\begin{align}
    \widetilde{\lambda}_{1} &\equiv -\frac{a-\lambda_{1}}{s} \,,
    &
    \widetilde{\lambda}_{4} \equiv -\frac{a-\lambda_{4}}{s} \,,
    \\
    C_{1} &\equiv \frac{1}{\sqrt{\widetilde{\lambda}_{1}^2 +1}} \,,&
     C_{4} \equiv \frac{1}{\sqrt{\widetilde{\lambda}_{4}^2 +1}} \,.
\end{align}
\end{itemize}

\subsection{Output density operators}
The joint state $\psi_{A_1 B}$ of the reference system and Bob's output  is obtained by applying the depolarizing channel: %
\begin{align}
&
    \psi_{A_1 B} = (\mathbb{1}_{A_1} \otimes \mathcal{N}_{A \to B})( \psi_{A_1 A})
    \nonumber\\
    &= \left(1-\frac{3}{4}q\right)\psi_{A_1 A}
    +\frac{q}{4} [(\mathbb{1}_{A_1} \otimes X)\psi_{A_1 A}(\mathbb{1}_{A_1} \otimes X) 
    \nonumber\\&  \phantom{=}
    + (\mathbb{1}_{A_1} \otimes Y)\psi_{A_1 A}(\mathbb{1}_{A_1} \otimes Y) 
    \nonumber\\
    &\phantom{=}+ (\mathbb{1}_{A_1} \otimes Z)\psi_{A_1 A}(\mathbb{1}_{A_1} \otimes Z) ] \,.
\end{align}
Algebraic manipulations yield
\begin{align}
    \psi_{A_1 B} 
    &= \left(1-\frac{q}{2}\right)(1-\alpha)\ketbra{00} + \left(1-\frac{q}{2}\right)\alpha\ketbra{11} 
    \nonumber\\& \phantom{=}
    +\frac{q}{2}(1-\alpha)\ketbra{01} + \frac{q}{2}\alpha\ketbra{10}
    \nonumber\\
    &\phantom{=}+ (1-q)\sqrt{\alpha}\sqrt{1-\alpha}(\ketbra{00}{11}+\ketbra{11}{00}) \,.
\end{align}
The reduced matrices $\psi_{A_1}$ and $\psi_{B}$ are, thus,
\begin{align}
    \psi_B %
    &= \left[\left(1-\frac{q}{2}\right)*\alpha\right] \ketbra{0}+
    \left[\frac{q}{2}*\alpha\right] \ketbra{1} \,,
    \label{Equation:reduced_matrix_of_out_depolarizing}
    \\
    \psi_{A_1} %
    &=  (1-\alpha)\ketbra{0}+\alpha\ketbra{1},
\end{align}
where $\alpha*\beta=(1-\alpha)\beta+\alpha(1-\beta)$.
%
Then, %
\begin{align}
&
    \psi_{A_1}\otimes \psi_{B} 
    \nonumber\\
    &= (1-\alpha)\left[\left(1-\frac{q}{2}\right)*\alpha\right]\ketbra{00} 
    \nonumber\\&\phantom{=}
    + (1-\alpha)\left[\frac{q}{2}*\alpha\right]\ketbra{01} 
    \nonumber\\&\phantom{=}
    + \alpha\left[\left(1-\frac{q}{2}\right)*\alpha\right]\ketbra{10}
    + \alpha\left[\frac{q}{2}*\alpha\right]\ketbra{11} \,.
\end{align}

The logarithm of $\psi_{A_1}\otimes \psi_{B}$ can be computed directly as it is diagonal in the standard basis. This is not the case for $\psi_{A_1 B}$.
Using (\ref{Equation:eigenvalues_of_decomposition}),  the spectral decomposition %
consists of the following eigenvalues:
\begin{align}
     \lambda_{1} &= \frac{1}{2} \left( 1-\frac{q}{2}
     %
     %
     +
     \sqrt{\left[1-\frac{q}{2}\right]^2-4q\left[1-\frac{3q}{4}\right]\alpha+4q\left[1-\frac{3q}{4}\right]\alpha^2} \right),
     \nonumber\\
     \lambda_{4} &= \frac{1}{2} \left( 1-\frac{q}{2}
     %
     -
     \sqrt{\left[1-\frac{q}{2}\right]^2-4q\left[1-\frac{3q}{4}\right]\alpha+4q\left[1-\frac{3q}{4}\right]\alpha^2} \right),
     \nonumber\\
    \lambda_2 &= \frac{q}{2}(1-\alpha) \text{, } 
    \lambda_3 = \frac{q}{2}\alpha \,.
\end{align}
Using the Taylor approximation in (\ref{Equation:sqrt_series}), we approximate $\lambda_1$ and $\lambda_4$ by
\begin{align}
    \lambda_{1} = \frac{1}{2} \left ( 1-\frac{q}{2} + \left( 1-\frac{q}{2}\right) - \frac{4q\left(1-\frac{3}{4}q\right)}{2\left(1-\frac{q}{2}\right)} \alpha + \mathcal{O}(\alpha^2) \right ) \,,
    \nonumber\\
    \lambda_{4} = \frac{1}{2} \left ( 1-\frac{q}{2} - \left( 1-\frac{q}{2}\right) - \frac{4q\left(1-\frac{3}{4}q\right)}{2\left(1-\frac{q}{2}\right)} \alpha + \mathcal{O}(\alpha^2) \right ) \,.
\end{align}
That is,
\begin{align}
    \lambda_1 &= 1-\frac{q}{2} - \frac{q\left(1-\frac{3}{4}q\right)}{\left(1-\frac{q}{2}\right)} \alpha +\mathcal{O}(\alpha^2) \,,
    \\
    \lambda_4 &=  \frac{q\left(1-\frac{3}{4}q\right)}{\left(1-\frac{q}{2}\right)} \alpha +\mathcal{O}(\alpha^2)\,.
\end{align}

The eigenvectors of $\psi_{A_1B}$ are given in \eqref{Equation:eigenvectors_of_decomposition}, with $\widetilde{\lambda}_1$ and $\widetilde{\lambda}_4$
satisfying
\begin{align}
        \widetilde{\lambda}_1%
    &= \frac{q-2}{2(q-1)\sqrt{\alpha}} + \mathcal{O}(\sqrt{\alpha})\,,
    \\
      \widetilde{\lambda}_4%
    &= -\frac{2(q-1)}{(q-2)}\sqrt{\alpha} + \mathcal{O}(\sqrt{\alpha}^\frac{3}{2}) \,,
\end{align}
by \eqref{Equation:lambda1_factor_series}. and
\begin{align}
    C_1^2 &= 4\frac{(q-1)^2}{(q-2)^2}\alpha + \mathcal{O}(\alpha^2)\,,\;
    C_4^2 
    = 1+ \mathcal{O}(\alpha) \,,
\end{align}
by \eqref{Equation:lambda1_normalization_series}.


By applying (\ref{Equation:log2_series}), we approximate the logarithm of the eigenvalues as follows. For the joint state $\psi_{A_1 B}$,
\begin{align}
     \log (\lambda_1) &= \log\left(1-\frac{q}{2}\right) + \mathcal{O}(\alpha) \,,
    \\
    \log (\lambda_2) &= \log(\frac{q}{2}) + \mathcal{O}(\alpha)\,,
    \\
    \log (\lambda_3) &= \log(\frac{q}{2}) + \log(\alpha) 
    + \mathcal{O}(\alpha^2) \,,
    \\
    \log (\lambda_4) &= \log(C(q)) + \log(\alpha) 
        + \mathcal{O}(\alpha^2)\,.
\end{align}
As for the product state $\psi_{A_1}\otimes \psi_B$, we have
\begin{align}
    \log((1-\alpha)\left[\left(1-\frac{q}{2}\right)*\alpha\right]) &= \log\left(1-\frac{q}{2}\right) + \mathcal{O}(\alpha)\,,
    \\
    \log((1-\alpha)\left[\left(\frac{q}{2}\right)*\alpha\right]) &= \log(\frac{q}{2}) + \mathcal{O}(\alpha)\,,
    \\
    \log((\alpha)\left[\left(1-\frac{q}{2}\right)*\alpha\right]) &= \log\left(1-\frac{q}{2}\right) 
    + \log(\alpha)
    + \mathcal{O}(\alpha)\,,
    \\
    \log((\alpha)\left[\left(\frac{q}{2}\right)*\alpha\right]) &= \log(\frac{q}{2}) + \log(\alpha). 
\end{align}
Hence, the operator-logarithm for $\psi_{A_1 B}$ satisfies
\begin{align}
     &\log(\psi_{A_1 B})
     \nonumber\\
     &= \log (\lambda_1) \ketbra{\lambda_1} + \log (\lambda_2) \ketbra{\lambda_2} 
     \nonumber\\& \phantom{=} 
     + \log (\lambda_3) \ketbra{\lambda_3} +  \log (\lambda_34) \ketbra{\lambda_4} 
    \nonumber\\
    &= \left[\left(1-\frac{q}{2}\right) + \frac{4(q-1)^2}{(q-2)^2}\alpha\log(\alpha) 
    + \mathcal{O}(\sqrt{\alpha}) \right]\ketbra{00}
    \nonumber\\
    &\phantom{=}+ \left[ \log(\frac{q}{2}) + \mathcal{O}(\alpha)\right]\ketbra{01}
    \nonumber\\
    &\phantom{=}+ \left[\log(\frac{q}{2}) + \log(\alpha) + \mathcal{O}(\alpha^2) \right] \ketbra{10} 
    \nonumber\\
    &\phantom{=}+ \left [ \log(C(q)) + \log(\alpha) + \mathcal{O}(\alpha\log(\alpha)) \right ]
    \ketbra{11}
    \nonumber\\
    &\phantom{=}+ \left [ -\frac{2(q-1)}{(q-2)}\sqrt{\alpha}\log(\alpha) + \mathcal{O}(\sqrt{\alpha}) \right ] 
    \ketbra{00}{11}
    \nonumber\\
    &\phantom{=}+ \left [ -\frac{2(q-1)}{(q-2)}\sqrt{\alpha}\log(\alpha) + \mathcal{O}(\sqrt{\alpha}) \right ] 
    \ketbra{11}{00}\,,
    \label{Equation:psi_AB}
\end{align}
and for  $\psi_{A_1}\otimes \psi_{B}$,
\begin{align}
&    \log(\psi_{A_1}\otimes \psi_{B}) 
\nonumber\\
&= \left[\log\left(1-\frac{q}{2}\right) + \mathcal{O}(\alpha)\right]\ketbra{00} 
    \nonumber\\
    &\phantom{=}
    + \left[\log(\frac{q}{2}) + \mathcal{O}(\alpha)\right]\ketbra{01}
    \nonumber\\
    &\phantom{=}+ \left[\log\left(1-\frac{q}{2}\right) + \log(\alpha) + \mathcal{O}(\alpha)\right]\ketbra{10} 
    \nonumber\\
    &\phantom{=}+ \left[\log(\frac{q}{2}) + \log(\alpha) + \mathcal{O}(\alpha)\right]\ketbra{11} \,.
    \label{Equation:psi_AB_product}
\end{align}

 \section{Relative Entropy and Moments}
 In this section, we develop the approximations for the relative entropy $D(\psi_{A_1B}||\psi_{A_1}\otimes \psi_{B})$, and its second and fourth moments,  $V(\psi_{A_1B}||\psi_{A_1}\otimes \psi_{B})$ and  $Q(\psi_{A_1B}||\psi_{A_1}\otimes \psi_{B})$.
 \label{appendix:moments_calc}
 \subsection{Relative Entropy}
 Consider the relative entropy, $D(\psi_{A_1B}||\psi_{A_1}\otimes \psi_{B})$.
By subtracting  (\ref{Equation:psi_AB_product}) from (\ref{Equation:psi_AB}), 
\begin{align}
    &%
    \log(\psi_{A_1 B}) - \log(\psi_{A_1}\otimes \psi_{B}) 
    \nonumber\\
   &=
   \left [
   \frac{4(q-1)^2}{(q-2)^2}\alpha\log(\alpha) + \mathcal{O}(\sqrt{\alpha})
    \right ]\ketbra{00} 
   +  \left [
   \mathcal{O}(\alpha )
   \right ]\ketbra{01} 
   \nonumber\\
   & \phantom{=}+  \left [
     \log(\frac{q}{2})-\log\left(1-\frac{q}{2}\right) + \mathcal{O}(\alpha)
   \right ]\ketbra{10} 
   \nonumber\\
   & \phantom{=}+  \left [
   \log(C(q)) - \log(\frac{q}{2}) + \mathcal{O}(\alpha \log(\alpha))
   \right ]\ketbra{11} 
   \nonumber\\
   & \phantom{=}+ \left [
   -\frac{2(q-1)}{(q-2)}\sqrt{\alpha} \log(\alpha) + \mathcal{O}(\sqrt{\alpha})
   \right ]
   \ketbra{00}{11} 
    \nonumber\\
   & \phantom{=}+ \left [
   -\frac{2(q-1)}{(q-2)}\sqrt{\alpha} \log(\alpha) + \mathcal{O}(\sqrt{\alpha})
   \right ]
   \ketbra{11}{00} \,.
   \label{Equatioin:log_minus_log}
\end{align}
Multiplying  by $\psi_{A_1B}$, we have
\begin{align}
&
\psi_{A_1 B}[ \log(\psi_{A_1 B}) - \log(\psi_{A_1}\otimes \psi_{B})] 
   \nonumber\\
   &=
   \bigg[
   \left(1-\frac{q}{2}\right)\frac{4(q-1)^2}{(q-2)^2}\alpha\log(\alpha)
   -2(1-q)\frac{2(q-1)}{(q-2)}\alpha\log(\alpha) 
   + \mathcal{O}(\sqrt{\alpha})
    \bigg]\ketbra{00} 
   +  
   \mathcal{O}(\alpha )
   \left ( \ketbra{01} + \ketbra{10} + \ketbra{11} \right )
   \nonumber\\
   &\phantom{=} + \left [
   \mathcal{O}(\sqrt{\alpha} \log(\alpha))
   \right ] \ketbra{00}{11} 
   + \left [
   \mathcal{O}(\sqrt{\alpha} \log(\alpha))
   \right ] \ketbra{11}{00}  \,.
\end{align}
Applying the trace, we approximate the relative entropy: 
\begin{align}
&
{D(\psi_{A_1B}||\psi_{A_1}\otimes \psi_{B})}
    \nonumber\\ %
    &=
    \trace\left[\psi_{A_1B}\log(\psi_{A_1B})
    -
\psi_{A_1B}\log(\psi_{A_1}\otimes \psi_{B})\right]  
    \nonumber\\
    &=
    \left [\left(1-\frac{q}{2}\right)\frac{4(q-1)^2}{(q-2)^2}
    -
    (1-q)\frac{2(q-1)}{(q-2)}\right ]\alpha\log(\alpha)
    +  \mathcal{O}(\alpha) 
    \nonumber\\
    &=
    -2\frac{(1-q)^2}{2-q}\alpha\log(\alpha) + \mathcal{O}(\alpha) \,.
    \label{Equation:Divergence}
\end{align}
 \subsection{Second Moment}
 Next, we consider the second moment of the relative entropy.
 By squaring (\ref{Equatioin:log_minus_log}), we have:
 \begin{align}
     &%
     |\log(\psi_{A_1 B}) - \log(\psi_{A_1}\otimes \psi_{B})|^2
     \nonumber\\
     &=
     \left [
    \frac{4(q-1)^2}{(q-2)^2}\alpha \log^2(\alpha) + \mathcal{O}(\alpha \log(\alpha))
    \right ]\ketbra{00} 
   \nonumber\\
   &\phantom{=} 
   +  \left [
   \mathcal{O}(\alpha^2 )
   \right ]\ketbra{01} 
   \nonumber\\
   &\phantom{=} 
   +  \left [
     \left(\log(\frac{q}{2})-\log\left(1-\frac{q}{2}\right)\right)^2 + \mathcal{O}(\alpha)
   \right ]\ketbra{10} 
   \nonumber\\
   &\phantom{=} 
   +  \Big [
  \left( \log(C(q)) - \log(\frac{q}{2}) \right)^2 
  +\frac{4(q-1)^2}{(q-2)^2}\alpha \log^2(\alpha) + \mathcal{O}(\alpha \log(\alpha))
   \Big ]\ketbra{11} 
   \nonumber\\
   &\phantom{=} 
   + \left [
    \mathcal{O}(\sqrt{\alpha}\log(\alpha))
   \right ]( \ketbra{00}{11} + \ketbra{11}{00} ) \,.
   \label{Equation:log_minus_log_square}
\end{align}
As we multiply by $\psi_{A_1 B}$, 
\begin{align}
    &\psi_{A_1 B} |\log(\psi_{A_1 B}) -
    \log(\psi_{A_1 }\otimes \psi_{B})|^2
    %
    \nonumber\\
    &= 
    \left [
    \left(1-\frac{q}{2}\right)\frac{4(q-1)^2}{(q-2)^2}\alpha \log^2(\alpha)
    + \mathcal{O}(\alpha \log(\alpha))
    \right ]\ketbra{00} 
   \nonumber\\
   &\phantom{=} 
   +  \left [
   \mathcal{O}(\alpha^2 )
   \right ]\ketbra{01} 
   +  \left [
     \mathcal{O}(\alpha)
   \right ]\ketbra{10} 
   \nonumber\\
   &\phantom{=} 
   +  \left [
   \mathcal{O}(\alpha \log(\alpha))
   \right ]\ketbra{11} 
   \nonumber\\
   &\phantom{=} 
   + \left [
    \mathcal{O}(\sqrt{\alpha}\log(\alpha)
   \right ]
   ( \ketbra{00}{11} + \ketbra{11}{00} ) \,.
\end{align}
Using (\ref{Equation:Divergence}) and applying the trace to the above, we obtain an approximation of the second moment:
\begin{align}
    V(\psi_{A_1B}||\psi_{A_1}\otimes \psi_{B}) 
    %
    & =
    \trace\Big[\psi_{A_1B}\left|\log(\psi_{A_1B})-\log(\psi_{A_1}\otimes \psi_{B})
    %
    - D(\psi_{A_1B}||\psi_{A_1}\otimes \psi_{B})\right|^2 \Big]
    \nonumber\\ &
    = \trace\left[\psi_{A_1B}\left|\log(\psi_{A_1 B})-\log(\psi_{A_1}\otimes \psi_{B})\right|^2 \right]
    \nonumber\\ &
    \phantom{=}
    - 2D(\psi_{A_1 B}||\psi_{A_1}\otimes \psi_{B})
    \trace\left[\psi_{A_1 B}\left|\log(\psi_{A_1 B})-\log(\psi_{A_1}\otimes \psi_{B})\right|\right]
    \nonumber\\ &
    \phantom{=}
    + D(\psi_{A_1 B}||\psi_{A_1}\otimes \psi_{B})^2
    \nonumber\\ &
    = \trace\left[\psi_{A_1 B}\left|\log(\psi_{A_1 B})-\log(\psi_{A_1}\otimes \psi_{B})\right|^2 \right]
    - D(\psi_{A_1 B}||\psi_{A_1}\otimes \psi_{B})^2
    \nonumber\\ &
    = \left(1-\frac{q}{2}\right)\frac{4(q-1)^2}{(q-2)^2}\alpha \log^2(\alpha)
    + \mathcal{O}(\alpha \log(\alpha)) 
    + \mathcal{O}(\alpha^2 \log^2(\alpha))
    \nonumber\\ &
    = \frac{2(q-1)^2}{q-2}\alpha \log^2(\alpha) + \mathcal{O}(\alpha \log(\alpha)) \,.
\end{align}

 \subsection{Fourth Moment}
 Consider %
\begin{align}
    {Q(\psi_{A_1 B}||\psi_{A_1}\otimes \psi_{B})}
    %
   & 
   =\trace\Big[\psi_{A_1 B}\Big|\log(\psi_{A_1 B})
    -\log(\psi_{A_1}\otimes \psi_{B})
    - D(\psi_{A_1 B}||\psi_{A_1}\otimes \psi_{B})\Big|^4 \Big] \,.
\end{align}
We use the binomial identity:
 $%
 (X-c)^4 = X^4 - 4c X^3 + 6c^2 X^2 -4c^3 X + c^4 %
$, %
for a Hermitian operator $X\in\mathcal{L}(\mathcal{H})$ and a real number $c$.
Substituting  $X=\log(\psi_{A_1 B})-\log(\psi_{A_1}\otimes \psi_{B})$, and $c=D(\psi_{A_1 B}||\psi_{A_1}\otimes \psi_{B})$, we obtain
\begin{align}
    {Q(\psi_{A_1 B}||\psi_{A_1}\otimes \psi_{B})}
    %
    &=
    \trace\left[\psi_{A_1 B}\left(\log(\psi_{A_1 B})
  -\log(\psi_{A_1}\otimes \psi_{B})\right)^4\right]
    \nonumber\\
    &\phantom{=} 
    - 4D(\psi_{A_1 B}||\psi_{A_1}\otimes \psi_{B})
    \trace\left[\psi_{A_1 B}\big(\log(\psi_{A_1 B})
    -\log(\psi_{A_1}\otimes \psi_{B})\big)^3\right]
    \nonumber\\  &\phantom{=} 
    + \mathcal{O}(\alpha^3 \log^4(\alpha)) \,.
\end{align}

Using (\ref{Equatioin:log_minus_log}) and (\ref{Equation:log_minus_log_square}), 
\begin{align}
        &%
        {(\log(\psi_{A_1 B})}-
        {\log(\psi_{A_1}\otimes \psi_{B}))^4}
    \nonumber\\
    &= \left [
    \mathcal{O}(\alpha \log^2(\alpha))
    \right ]\ketbra{00} 
   +  \left [
   \mathcal{O}(\alpha^4 )
   \right ]\ketbra{01} 
   \nonumber\\%
   &\phantom{=}+  \left [
     \mathcal{O}(1)
   \right ]\ketbra{10} 
   +  [\mathcal{O}(1)]\ketbra{11} 
   \nonumber\\
   &\phantom{=}
   + \left [
    \mathcal{O}(\sqrt{\alpha}\log(\alpha)
   \right ] \ketbra{00}{11} 
   + \left [
    \mathcal{O}(\sqrt{\alpha}\log(\alpha)
   \right ] \ketbra{11}{00} ,
   \end{align}
and
\begin{align}
        &%
        {(\log(\psi_{A_1 B})}-
        {\log(\psi_{A_1}\otimes \psi_{B}))^3} 
    \nonumber\\
    &= \left [
    \mathcal{O}(\alpha \log^2(\alpha))
    \right ]\ketbra{00} 
   +  \left [
   \mathcal{O}(\alpha^3 )
   \right ]\ketbra{01} 
    \nonumber\\
   &\phantom{=}
   +   [
     \mathcal{O}(1)
    ]\ketbra{10} 
   %
   +  
   \left[
   \mathcal{O}(1)
   \right]\ketbra{11} 
   \nonumber\\
   &\phantom{=} 
   + \left [
    \mathcal{O}(\sqrt{\alpha}\log(\alpha)
   \right ] \ketbra{00}{11} 
   + \left [
    \mathcal{O}(\sqrt{\alpha}\log(\alpha)
   \right ] \ketbra{11}{00} \,.
\end{align}

Multiplying by $\psi_{A_1B}$, we have
\begin{align}
        &\psi_{A_1 B}
        \left(\log(\psi_{A_1 B})
        -
        \log(\psi_{A_1}\otimes \psi_{B})\right)^4
    \nonumber\\
    & =\left [
    \mathcal{O}(\alpha \log^2(\alpha))
    \right ]\ketbra{00} 
   +  \left [
   \mathcal{O}(\alpha^4 )
   \right ]\ketbra{01} 
   \nonumber\\
   &\phantom{=} 
   +  \left [
     \mathcal{O}(\alpha)
   \right ] \left (\ketbra{10} +\ketbra{11} \right )
   + \left [
    \mathcal{O}(\sqrt{\alpha}\log(\alpha)
   \right ] \ketbra{00}{11} 
      \nonumber\\
   &\phantom{=} 
   + 
   \left [
    \mathcal{O}(\sqrt{\alpha}\log(\alpha)
   \right ] \ketbra{11}{00} \,,
   \end{align}
   and
\begin{align}
        &
        \psi_{A_1 B}\left|\left(\log(\psi_{A_1 B})
        \log(\psi_{A_1}\otimes \psi_{B})\right)^3\right|
    \nonumber\\
    &=
    \left [
    \mathcal{O}(\alpha \log^2(\alpha))
    \right ]\ketbra{00} 
   +  \left [
   \mathcal{O}(\alpha^3 )
   \right ]\ketbra{01} 
   \nonumber\\
   &\phantom{=} 
   +  \left [
     \mathcal{O}(\alpha)
   \right ] \left (\ketbra{10} +\ketbra{11} \right )
   \nonumber\\
   &\phantom{=} 
   + \left [
    \mathcal{O}(\sqrt{\alpha}\log(\alpha)
   \right ] (\ketbra{00}{11} + \ketbra{11}{00}) 
   \nonumber\\
   &\phantom{=} 
   + \left [
    \mathcal{O}(\sqrt{\alpha}\log(\alpha)
   \right ] \ketbra{11}{00} \,.
\end{align}
Finally, by tracing out, we obtain the order of the fourth moment:
\begin{align}
     Q(\psi_{A_1 B}&||\psi_{A_1}\otimes \psi_{B}) = \mathcal{O}(\alpha\log^2(\alpha)) \,.
\end{align}

\section{Code size}
\label{appendix:Rate_calc}
We observe that when choosing $\alpha=\alpha_n=\frac{\gamma_n}{\sqrt{n}}$ with $\gamma_n=n^{\nu-\frac{1}{6}}$, we obtain the following approximation for the first term on the right-hand side of (\ref{Equation:num_of_data_bits1}), %
\begin{align}
    & 
    n D(\psi_{A_1 B}||\psi_{A_1}\otimes \psi_{B})
    \nonumber\\
    &= -2\frac{(1-q)^2}{2-q} \sqrt{n}\gamma_n \log(\frac{\gamma_n}{\sqrt{n}})
    +O(\sqrt{n}\gamma_n) %
    \nonumber\\
    &= -2\frac{(1-q)^2}{2-q} n^{\nu+\frac{1}{3}} \log(n^{\nu-\frac{2}{3}})
    + O(n^{\nu+\frac{1}{3}}) %
    \nonumber\\
    &= 2\left (\frac{2}{3} - \nu \right ) \frac{(1-q)^2}{2-q} n^{\nu+\frac{1}{3}} \log{n}
    +O(n^{\nu+\frac{1}{3}}) \,. %
\end{align}

In a similar manner, we approximate the second term by
\begin{align}
    \sqrt{n V(\psi_{A_1 B}||\psi_{A_1}\otimes \psi_{B})}&= O\left (\sqrt{\sqrt{n}\gamma_n \log^2(n^{\nu-\frac{2}{3}})} \right )
    \nonumber\\
    &= O\left (\sqrt{n^{\nu+\frac{1}{3}}} \log{n} \right ) 
    \nonumber\\
    &=O\left (n^{\frac{\nu}{2}+\frac{1}{6}} \log{n} \right )  \,.
\end{align}
Finally,  the last term  (\ref{Equation:num_of_data_bits1}) is $C_n$, as defined in 
(\ref{Equation:C_n}). To show that this term vanishes, we write %
\begin{align}
    \frac{[Q(\psi_{A_1 B}||\psi_{A_1}\otimes \psi_{B})]^\frac{3}{4}}{V(\psi_{A_1 B}||\psi_{A_1}\otimes \psi_{B})} &=
    O\left (\frac{\left (n^{\frac{\nu}{2}+\frac{1}{6}} \log{n}\right )^\frac{3}{4}}{n^{\frac{\nu}{2}+\frac{1}{6}} \log{n}} \right )
    \nonumber\\
    &= O \left( n^{-\frac{\nu}{8}-\frac{1}{6}} \log^{-\frac{1}{4}}(n) \right)
\end{align}
which tends to zero as $n\to\infty$.
Hence,
\begin{align}
     \log(M) \geq 2\left (\frac{2}{3} - \nu \right ) \frac{(1-q)^2}{2-q} n^{\nu+\frac{1}{3}} \log{n} + O(n^{\nu+\frac{1}{3}})
\end{align}
for every $0<\nu<\frac{1}{6}$.

\section{Energy-Constrained Capacities}
\label{Appendix:Energy_Constraint}
We provide the proof for the energy-constrained capacity formula of the qubit depolarizing channel.
Note that this model does not involve a covertness requirement.

\subsection{Unassisted Capacity}
\label{Appendix:Energy_Constraint_0}
We begin with communication without assistance.
\begin{theorem}
\label{Theorem:Energy_Constraint_0}
  Consider a qubit depolarizing  channel $\mathcal{N}_{A \to B}$
    as specified in Section~\ref{Subsection:channel_model}, and let $E\in [0,1]$. 
The energy-constrained capacity without entanglement assistance is given by 
\begin{align}
C_0(\mathcal{N},E)=
\begin{cases}
h_2\left(E*\frac{q}{2}\right)-h_2\left(\frac{q}{2}\right)
& 0< E< \frac{1}{2} \\
1-h_2\left(\frac{q}{2}\right)
&  \frac{1}{2} \leq  E\leq 1,
\end{cases}
\label{Equation:Energy_Capacity_0_App}
\end{align}
where `$*$' denotes the binary convolution operation: $\alpha*\beta=(1-
\alpha)\beta+\alpha(1-\beta)$. 
\end{theorem}

\begin{proof}
Consider the general capacity characterization in \eqref{Eqution:Unassisted_Capacity}.
For $0< E< \frac{1}{2}$,
the direct part follows by choosing the ensemble
$\left\{ \left( 1-E,E \right), \ket{0},\ket{1}  \right\}$, which results in the average input state 
\begin{align}
\widetilde{\rho}_A \equiv (1-E)\ketbra{0} + E\ketbra{1} \,.
\end{align}
Otherwise, if $E\geq \frac{1}{2}$, set the input ensemble to be uniform, i.e., $\left\{ \left( \frac{1}{2},\frac{1}{2} \right), \ket{0},\ket{1}  \right\}$

We move to the converse part.
For $E\geq \frac{1}{2}$, the converse part immediately follows from the capacity result without constraints. Hence, suppose that $0< E< \frac{1}{2}$.
For every input ensemble, the Holevo information functional, $I(X;B)_\rho$, is bounded as follows:
\begin{align}
    I(X;B)_\rho &= H(B)_{\rho} - H(B|X)_{\rho}
    \nonumber\\
    & \leq H(B)_{\rho}-H^{\min}(\mathcal{N})
\end{align}
where $H^{\min}(\mathcal{N})$ is the minimum output entropy,
$
H^{\min}(\mathcal{N})\equiv \min_{\rho_A} H(\mathcal{N}(\rho_A)) 
$. 
For the qubit depolarizing channel,
\begin{align}
H^{\min}(\mathcal{N})=h_2\left( \frac{q}{2} \right)
\end{align}
by \cite[Sec. 20.4.4]{wilde2013quantum}. 

It remains to bound $H(B)_{\rho}$.
Consider a general input state 
\begin{align}
\rho_A \equiv 
\begin{pmatrix}
1-a & b \\ b^* & a 
\end{pmatrix} 
\end{align}
that satisfies the maximization constraint, 
$\trace(\mathsf{F} \rho_A)\leq E$
(see \eqref{Eqution:Unassisted_Capacity}). 
%
%
%
The corresponding output state is  
\begin{align}
\rho_B \equiv 
\begin{pmatrix}
(1-q)(1-a) + \frac{q}{2} & (1-q)b \\ (1-q)b^* & (1-q)a + \frac{q}{2}
\end{pmatrix}\,.
\end{align}
%
The eigenvalues of $\rho_B$ are thus
\begin{align}
    &\pi_{\pm} = \frac{1}{2}\left(1
    \,
    \pm 
    \sqrt{1 - 4 \left (((1-q)(1-a) + \frac{q}{2})((1-q)a + \frac{q}{2})\right ) + 4 |b|^2} \right) \,.
\end{align}
Hence, the output entropy  is
\begin{align}
    H(\rho_B) = -\pi_+ \log(\pi_+) - \pi_- \log(\pi_-) \,.
\end{align}
Notice that  the eigenvalues $\pi_\pm$ do not depend on the phase of the off-diagonal entry, $b$, hence the entropies of $\rho_A$ and $Z\rho_A Z$ are the same. It thus follows that
\begin{align}
    H(\rho_B) = H(Z \rho_B Z)
\end{align}
with
\begin{align}
Z\rho_B Z \equiv 
\begin{pmatrix}
(1-q)(1-a) + \frac{q}{2} & -(1-q)b \\ -(1-q)b^* & (1-q)a + \frac{q}{2}
\end{pmatrix}\,.
\end{align}
Since the entropy is concave, we have
\begin{align}
    &H(\rho_B) =\frac{1}{2}H(\rho_B) + \frac{1}{2}H(Z\rho_B Z)
    \nonumber\\
    &\leq H\left(\frac{1}{2}\rho_B + \frac{1}{2}Z\rho_B Z\right) 
    \nonumber\\
    &= H \left ( \left[(1-q)(1-a) + \frac{q}{2} \right] \ketbra{0}
    + \left[(1-q)a + \frac{q}{2}\right] \ketbra{1} \right )
    \nonumber\\
    &= H\Big( \mathcal{N}_{A \to B}\left ( (1-a)\ketbra{0} + a \ketbra{1} \right ) \Big ).
\end{align}
Therefore, the maximal output entropy can be achieved with $b=0$. i.e., for an input  state of the form
\begin{align}
\rho_A \equiv 
\begin{pmatrix}
1-a & 0 \\ 0 & a 
\end{pmatrix} \,.
\end{align}
Since the energy constraint requires $a\leq E$,
\begin{align}
H(B)_\rho 
&\leq \max_{0\leq a\leq E} H\Big( \mathcal{N}_{A \to B}\left ( (1-a)\ketbra{0} + a \ketbra{1} \right ) \Big )
\nonumber\\
&= \max_{0\leq a\leq E} h_2\Big( a*\frac{q}{2} \Big )
\nonumber\\
&=  h_2\Big( E*\frac{q}{2} \Big )
\end{align}
This completes the proof of Theorem~\ref{Theorem:Energy_Constraint_0}.
\end{proof}

\subsection{Entanglement-Assisted Capacity}
We move to the energy-constrained capacity of the qubit depolarizing channel, when Alice and Bob are provided with pre-shared entanglement. 

\begin{theorem}
\label{Theorem:Energy_Constraint_EA}
The energy-constrained entanglement-assisted capacity  is given by 
\begin{align}
&C_{\text{EA}}(\mathcal{N},E)=
\begin{cases}
h_2(E)+h_2\left(E*\frac{q}{2}\right)-H(\Psi_{A_1 B})
& 0< E< \frac{1}{2} \\
2-H\left(1-\frac{3q}{4},\frac{q}{4},\frac{q}{4},\frac{q}{4}\right)
&  \frac{1}{2} \leq  E\leq 1,
\end{cases}
\label{Equation:Energy_Capacity_EA_App}
\end{align}
 where $\Psi_{A_1 B}\equiv (\mathrm{id}\otimes\mathcal{N})(\ketbra{\Psi_{A_1 A}})$. 
\end{theorem}
\begin{proof}
Recall that the entanglement-assisted capacity of a general channel $\mathcal{N}_{A\to B}$ is given by \ref{Equation:EA_Capacity}. For the qubit depolarizing channel,
we can restrict our attention to input states of the form 
$\ket{\psi_{A_1 A}} = \sqrt{1-a}\ket{00} + \sqrt{a}\ket{11}$, since the depolarizing channel is unitarily covariant (see 
\cite[Section~24.8]{wilde2013quantum}).
For $E<\frac{1}{2}$, the maximum is attained by the entangled state
\begin{align}
\ket{\Psi_{A_1 A}} = \sqrt{1-E}\ket{00} + \sqrt{E}\ket{11}
\,,
\end{align}
which is associated with an  energy value $\trace(\mathsf{F}\psi_A) = E$.
Whereas, for $E\geq \frac{1}{2}$ the capacity is attained with $a=\frac{1}{2}$.
This completes the proof of Theorem~\ref{Theorem:Energy_Constraint_EA}.
\end{proof} 
 \end{appendices}

 {
\bibliography{bibliography}

\begin{thebibliography}{10}
\providecommand{\url}[1]{#1}
\csname url@samestyle\endcsname
\providecommand{\newblock}{\relax}
\providecommand{\bibinfo}[2]{#2}
\providecommand{\BIBentrySTDinterwordspacing}{\spaceskip=0pt\relax}
\providecommand{\BIBentryALTinterwordstretchfactor}{4}
\providecommand{\BIBentryALTinterwordspacing}{\spaceskip=\fontdimen2\font plus
\BIBentryALTinterwordstretchfactor\fontdimen3\font minus
  \fontdimen4\font\relax}
\providecommand{\BIBforeignlanguage}[2]{{%
\expandafter\ifx\csname l@#1\endcsname\relax
\typeout{** WARNING: IEEEtran.bst: No hyphenation pattern has been}%
\typeout{** loaded for the language `#1'. Using the pattern for}%
\typeout{** the default language instead.}%
\else
\language=\csname l@#1\endcsname
\fi
#2}}
\providecommand{\BIBdecl}{\relax}
\BIBdecl

\bibitem{wang2020security}
M.~Wang, T.~Zhu, T.~Zhang, J.~Zhang, S.~Yu, and W.~Zhou, ``Security and privacy
  in {6G} networks: New areas and new challenges,'' \emph{Digital Commun.
  Netw.}, vol.~6, no.~3, pp. 281--291, 2020.

\bibitem{talbot2006complexity}
J.~Talbot, D.~Welsh, and D.~J.~A. Welsh, \emph{Complexity and cryptography: An
  Introduction}.\hskip 1em plus 0.5em minus 0.4em\relax Cambridge University
  Press, 2006, vol.~13.

\bibitem{bloch2011physical}
M.~Bloch and J.~Barros, \emph{Physical-layer Security: From Information Theory
  to Security engineering}.\hskip 1em plus 0.5em minus 0.4em\relax Cambridge
  University Press, 2011.

\bibitem{bennet1984quantum}
C.~H. Bennett and G.~Brassard, ``Quantum cryptography: Public key distribution
  and coin tossing,'' in \emph{Proc. IEEE Int. Conf. Comp.}, 1984.

\bibitem{renner2008security}
R.~Renner, ``Security of quantum key distribution,'' \emph{Int. J. Quantum
  Inf.}, vol.~6, no.~01, pp. 1--127, 2008.

\bibitem{scarani2009security}
V.~Scarani, H.~Bechmann-Pasquinucci, N.~J. Cerf, M.~Du{\v{s}}ek,
  N.~L{\"u}tkenhaus, and M.~Peev, ``The security of practical quantum key
  distribution,'' \emph{Rev. Mod. Phys.}, vol.~81, no.~3, p. 1301, 2009.

\bibitem{9380147}
M.~Bloch, O.~G{\"u}nl{\"u}, A.~Yener, F.~Oggier, H.~V. Poor, L.~Sankar, and
  R.~F. Schaefer, ``An overview of information-theoretic security and privacy:
  Metrics, limits and applications,'' \emph{IEEE J. Sel. Areas Inf. Theory},
  vol.~2, no.~1, pp. 5--22, 2021.

\bibitem{7217803}
R.~F. Schaefer, H.~Boche, and H.~V. Poor, ``Secure communication under channel
  uncertainty and adversarial attacks,'' \emph{Proc. IEEE}, vol. 103, no.~10,
  pp. 1796--1813, 2015.

\bibitem{bash2015hiding}
B.~A. Bash, D.~Goeckel, D.~Towsley, and S.~Guha, ``Hiding information in noise:
  Fundamental limits of covert wireless communication,'' \emph{IEEE Commun.
  Mag.}, vol.~53, no.~12, pp. 26--31, 2015.

\bibitem{8917582}
M.~Tahmasbi, A.~Savard, and M.~R. Bloch, ``Covert capacity of non-coherent
  rayleigh-fading channels,'' \emph{IEEE Trans. Inf. Theory}, vol.~66, no.~4,
  pp. 1979--2005, 2020.

\bibitem{Bash_first_classic}
B.~A. Bash, D.~Goeckel, and D.~Towsley, ``Limits of reliable communication with
  low probability of detection on {AWGN} channels,'' \emph{IEEE J. Sel. Areas
  Commun.}, vol.~31, no.~9, pp. 1921--1930, 2013.

\bibitem{Bloch}
M.~R. Bloch, ``Covert communication over noisy channels: A resolvability
  perspective,'' \emph{IEEE Trans. Inf. Theory}, vol.~62, no.~5, pp.
  2334--2354, 2016.

\bibitem{wang2016fundamental}
L.~Wang, G.~W. Wornell, and L.~Zheng, ``Fundamental limits of communication
  with low probability of detection,'' \emph{IEEE Trans. Inf. Theory}, vol.~62,
  no.~6, pp. 3493--3503, 2016.

\bibitem{Bash-Quantum}
M.~S. Bullock, A.~Sheikholeslami, M.~Tahmasbi, R.~C. Macdonald, S.~Guha, and
  B.~A. Bash, ``Fundamental limits of covert communication over
  classical-quantum channels,'' \emph{IEEE Trans. Inf. Theory}, 2025.

\bibitem{wang16cq-srlconverse}
L.~Wang, ``Optimal throughput for covert communication over a classical-quantum
  channel,'' in \emph{Proc. Inform. Theory Workshop ({ITW})}, Cambridge, UK,
  Sep. 2016, pp. 364--368, arXiv:1603.05823 [cs.IT].

\bibitem{Bash-Quantum-revision}
M.~S. Bullock, A.~Sheikholeslami, M.~Tahmasbi, R.~C. Macdonald, S.~Guha, and
  B.~A. Bash, ``Covert communication over classical-quantum channels,''
  arXiv:1601.06826 [quant-ph], 2023.

\bibitem{tahmasbi2019framework}
M.~Tahmasbi and M.~R. Bloch, ``Framework for covert and secret key expansion
  over classical-quantum channels,'' \emph{Phys. Rev. A}, vol.~99, no.~5, p.
  052329, 2019.

\bibitem{bash_first_bosonic}
B.~A. Bash, A.~H. Gheorghe, M.~Patel, J.~L. Habif, D.~Goeckel, D.~Towsley, and
  S.~Guha, ``Quantum-secure covert communication on bosonic channels,''
  \emph{Nat. commun.}, vol.~6, no.~1, pp. 1--9, 2015.

\bibitem{8976410}
M.~S. Bullock, C.~N. Gagatsos, S.~Guha, and B.~A. Bash, ``Fundamental limits of
  quantum-secure covert communication over bosonic channels,'' \emph{IEEE J.
  Sel. Areas Commun.}, vol.~38, no.~3, pp. 471--482, 2020.

\bibitem{Bash-Bosonic}
C.~N. Gagatsos, M.~S. Bullock, and B.~A. Bash, ``Covert capacity of bosonic
  channels,'' \emph{IEEE J. Sel. Areas Inf. Theory}, vol.~1, no.~2, pp.
  555--567, 8 2020.

\bibitem{9834394}
S.-Y. Wang, T.~Erdoğan, and M.~Bloch, ``Towards a characterization of the
  covert capacity of bosonic channels under trace distance,'' in \emph{Proc.
  IEEE Int. Symp. Inf. Theory (ISIT)}, 2022, pp. 318--323.

\bibitem{bash2017fundamental_sensing}
B.~A. Bash, C.~N. Gagatsos, A.~Datta, and S.~Guha, ``Fundamental limits of
  quantum-secure covert optical sensing,'' in \emph{Proc. IEEE Int. Symp. Inf.
  Theory (ISIT)}.\hskip 1em plus 0.5em minus 0.4em\relax IEEE, 2017, pp.
  3210--3214.

\bibitem{goeckel17sensinglinsystems-asilomar}
D.~Goeckel, B.~A. Bash, A.~Sheikholeslami, S.~Guha, and D.~Towsley, ``Covert
  active sensing of linear systems,'' in \emph{Proc. Asilomar Conf. Signals
  Syst. Comput.}, Pacific Grove, CA, USA, Nov. 2017.

\bibitem{tahmasbi2021covert}
M.~Tahmasbi and M.~R. Bloch, ``On covert quantum sensing and the benefits of
  entanglement,'' \emph{IEEE J. Sel. Areas Inf. Theory}, vol.~2, no.~1, pp.
  352--365, 2021.

\bibitem{Deng2022}
D.~Deng, X.~Li, S.~Dang, M.~C. Gursoy, and A.~Nallanathan, ``Covert
  communications in intelligent reflecting surface-assisted two-way relaying
  networks,'' \emph{IEEE Transactions on Vehicular Technology}, vol.~71,
  no.~11, pp. 12\,380--12\,385, 2022.

\bibitem{ZivariFard2022}
H.~ZivariFard, M.~R. Bloch, and A.~Nosratinia, ``Covert communication in the
  presence of an uninformed, informed, and coordinated jammer,'' in \emph{Proc.
  IEEE Int. Symp. Inf. Theory (ISIT)}, 2022, pp. 306--311.

\bibitem{Yang2022}
B.~Yang, T.~Taleb, G.~Chen, and S.~Shen, ``Covert communication for cellular
  and x2u-enabled uav networks with active and passive wardens,'' \emph{IEEE
  Network}, vol.~36, no.~1, pp. 166--173, 2022.

\bibitem{Amihood:ITW2022}
B.~Amihood and A.~Cohen, ``Covertly controlling a linear system,'' in
  \emph{2022 IEEE Information Theory Workshop (ITW)}, 2022, pp. 321--326.

\bibitem{hayashi2023covert}
M.~Hayashi and A.~Vazquez-Castro, ``Covert communication with gaussian noise:
  from random access channel to point-to-point channel,'' \emph{arXiv preprint
  arXiv:2310.15519}, 2023.

\bibitem{bounhar2023mixing}
A.~Bounhar, M.~Sarkiss, and M.~Wigger, ``Mixing a covert and a non-covert
  user,'' \emph{arXiv preprint arXiv:2305.06268}, 2023.

\bibitem{bennett1999entanglement}
C.~H. Bennett, P.~W. Shor, J.~A. Smolin, and A.~V. Thapliyal,
  ``Entanglement-assisted classical capacity of noisy quantum channels,''
  \emph{Phys. Rev. Lett}, vol.~83, no.~15, p. 3081, 1999.

\bibitem{EA_capacity_bennet}
------, ``Entanglement-assisted capacity of a quantum channel and the reverse
  shannon theorem,'' \emph{IEEE Trans. Inf. Theory}, vol.~48, no.~10, pp.
  2637--2655, 2002.

\bibitem{hao2021entanglement}
S.~Hao, H.~Shi, W.~Li, J.~H. Shapiro, Q.~Zhuang, and Z.~Zhang,
  ``Entanglement-assisted communication surpassing the ultimate classical
  capacity,'' \emph{Phys. Rev. Lett.}, vol. 126, no.~25, p. 250501, 2021.

\bibitem{chiuri2013experimental}
A.~Chiuri, S.~Giacomini, C.~Macchiavello, and P.~Mataloni, ``Experimental
  achievement of the entanglement-assisted capacity for the depolarizing
  channel,'' \emph{Phys. Rev. A}, vol.~87, no.~2, p. 022333, 2013.

\bibitem{9319007}
U.~Pereg, C.~Deppe, and H.~Boche, ``Quantum channel state masking,'' \emph{IEEE
  Trans. Inf. Theory}, vol.~67, no.~4, pp. 2245--2268, 2021.

\bibitem{bloch_resource_efficient}
S.-Y. Wang, S.-J. Su, and M.~Bloch, ``Resource-efficient entanglement-assisted
  covert communications over bosonic channels,'' \emph{Proc. IEEE Int. Symp.
  Inf. Theory (ISIT)}, 2024.

\bibitem{ChiuriGiacominiMacchiavelloMataloni:13p}
A.~Chiuri, S.~Giacomini, C.~Macchiavello, and P.~Mataloni, ``Experimental
  achievement of the entanglement-assisted capacity for the depolarizing
  channel,'' \emph{Phys. Rev. A}, vol.~87, no.~2, p. 022333, 2013.

\bibitem{Google:19p}
F.~Arute, K.~Arya, R.~Babbush, D.~Bacon, J.~C. Bardin, R.~Barends, R.~Biswas,
  S.~Boixo, F.~G. S.~L. Brandao, D.~A. Buell \emph{et~al.}, ``Quantum supremacy
  using a programmable superconducting processor,'' \emph{Nature}, vol. 574,
  no. 7779, pp. 505--510, 2019.

\bibitem{king2003capacity}
C.~King, ``The capacity of the quantum depolarizing channel,'' \emph{IEEE
  Trans. Inf. Theory}, vol.~49, no.~1, pp. 221--229, 2003.

\bibitem{depolarizing_comp}
D.~Leung and J.~Watrous, ``On the complementary quantum capacity of the
  depolarizing channel,'' \emph{Quantum}, vol.~1, 10 2015.

\bibitem{wilde2013quantum}
M.~M. Wilde, \emph{Quantum information theory}, 2nd~ed.\hskip 1em plus 0.5em
  minus 0.4em\relax Cambridge University Press, 2017.

\bibitem{9834764}
U.~Pereg, C.~Deppe, and H.~Boche, ``Communication with unreliable entanglement
  assistance,'' \emph{Preprint available on \texttt{arXiv:2112.09227}. Proc.
  IEEE Int. Symp. Inf. Theory (ISIT)}, pp. 2231--2236, 2022.

\bibitem{tomamichel2015quantum}
M.~Tomamichel, \emph{Quantum information processing with finite resources:
  mathematical foundations}.\hskip 1em plus 0.5em minus 0.4em\relax Springer,
  2015, vol.~5.

\bibitem{cormen2022introduction}
T.~H. Cormen, C.~E. Leiserson, R.~L. Rivest, and C.~Stein, \emph{Introduction
  to algorithms}.\hskip 1em plus 0.5em minus 0.4em\relax MIT press, 2022.

\bibitem{4475375}
D.~Kretschmann, D.~Schlingemann, and R.~F. Werner, ``The
  information-disturbance tradeoff and the continuity of stinespring's
  representation,'' \emph{IEEE Trans. Inf. Theory}, vol.~54, no.~4, pp.
  1708--1717, 2008.

\bibitem{deterministic_identification_Gaussian_Salariseddigh}
M.~J. Salariseddigh, U.~Pereg, H.~Boche, and C.~Deppe, ``Deterministic
  identification over fading channels,'' in \emph{IEEE Inf. Theory Workshop
  (ITW)}, 2021, pp. 1--5.

\bibitem{salariseddigh2022deterministic_poisson}
M.~J. Salariseddigh, U.~Pereg, H.~Boche, C.~Deppe, V.~Jamali, and R.~Schober,
  ``Deterministic identification for molecular communications over the poisson
  channel,'' \emph{arXiv:2203.02784}, 2022.

\bibitem{Ahlswede1989IdentificationVC}
R.~Ahlswede and G.~Dueck, ``Identification via channels,'' \emph{IEEE Trans.
  Inf. Theory}, vol.~35, pp. 15--29, 1989.

\bibitem{anshu2017one}
A.~Anshu, R.~Jain, and N.~A. Warsi, ``One shot entanglement assisted classical
  and quantum communication over noisy quantum channels: A hypothesis testing
  and convex split approach,'' \emph{arXiv preprint arXiv:1702.01940}, 2017.

\bibitem{khabbazi2019union}
S.~Khabbazi~Oskouei, S.~Mancini, and M.~M. Wilde, ``Union bound for quantum
  information processing,'' \emph{Proc. Royal Soc. A}, vol. 475, no. 2221, p.
  20180612, 2019.

\bibitem{wilde2017position}
M.~M. Wilde, ``Position-based coding and convex splitting for private
  communication over quantum channels,'' \emph{Quantum Inf. Proc.}, vol.~16,
  no.~10, p. 264, 2017.

\bibitem{holevo2003entanglement}
A.~S. Holevo, ``Entanglement-assisted capacity of constrained channels,'' in
  \emph{1st Int. Symp. Quantum Info.}, vol. 5128.\hskip 1em plus 0.5em minus
  0.4em\relax SPIE, 2003, pp. 62--69.

\bibitem{9173940}
S.~Guha, Q.~Zhuang, and B.~A. Bash, ``Infinite-fold enhancement in
  communications capacity using pre-shared entanglement,'' in \emph{Proc. IEEE
  Int. Symp. Inf. Theory (ISIT)}, 2020, pp. 1835--1839.

\bibitem{Shi_2020}
H.~Shi, Z.~Zhang, and Q.~Zhuang, ``Practical route to entanglement-assisted
  communication over noisy bosonic channels,'' \emph{Phys. Rev. App.}, vol.~13,
  no.~3, mar 2020.

\bibitem{bennett1997capacities}
C.~H. Bennett, D.~P. DiVincenzo, and J.~A. Smolin, ``Capacities of quantum
  erasure channels,'' \emph{Phys. Rev. Lett.}, vol.~78, no.~16, p. 3217, 1997.

\end{thebibliography}
}

\end{document}